\newtheorem{lemma}{Lemma}
\title{Quantifying identifiability to choose and audit \eps in differentially private deep learning}
\author{
  Daniel Bernau, Hannah Keller\thanks{Authors contributed equally.} \\
  SAP\\
  Karlsruhe, Germany \\
  \texttt{firstname.lastname@sap.com} \\
   \And
  G\"{u}nther Eibl\footnotemark[1] \\
  Salzburg University of Applied Sciences \\
  Salzburg, Austria \\
\texttt{guenther.eibl@en-trust.at} \\
   \And
  Philip W.~Grassal\footnotemark[1] \thanks{Work was done during an internship at SAP.} \\
  University of Heidelberg \\
  Heidelberg, Germany \\
  \texttt{philip-william.grassal@iwr.uni-heidelberg.de}\\
   \And
  Florian Kerschbaum \\
  University of Waterloo \\
  Waterloo, Canada \\
\texttt{florian.kerschbaum@uwaterloo.ca} \\
}
\begin{document}
\maketitle

\begin{abstract}
{Differential privacy allows bounding the influence that training data records have on a machine learning model. To use differential privacy in machine learning, data scientists must choose privacy parameters~\epsdlt. 
Choosing meaningful privacy parameters is key, since models trained with weak privacy parameters might result in excessive privacy leakage, while strong privacy parameters might overly degrade model utility. However, privacy parameter values are difficult to choose for two main reasons. 
First, the theoretical upper bound on privacy loss \epsdlt might be loose, depending on the chosen sensitivity and data distribution of practical datasets.
Second, legal requirements and societal norms for anonymization often refer to individual identifiability, to which \epsdlt are only indirectly related. 

We transform \epsdlt to a bound on the Bayesian posterior belief of the adversary assumed by differential privacy concerning the presence of any record in the training dataset. The bound holds for multidimensional queries under composition, and we show that it can be tight in practice. Furthermore, we derive an identifiability bound, which relates the adversary assumed in differential privacy to previous work on membership inference adversaries. We formulate an implementation of this differential privacy adversary that allows data scientists to audit model training and compute empirical identifiability scores and empirical \epsdlt. }
\end{abstract}

\keywords{Machine Learning, Differential Privacy}
\section{Introduction}
\label{sec:introduction}
The application of differential privacy (DP) for machine learning has received considerable attention by the privacy research community, leading to key contributions such as the tight estimation of privacy loss under composition~\cite{Mironov2017,KOV15,KOV17} and differentially private stochastic gradient descent~\cite{abadi2016,SCS13,BST14,shokri2015} (DPSGD) for training neural networks. Still, data scientists must choose privacy parameters \epsdlt to train a machine learning (ML) model using DPSGD. If the privacy parameters are stronger than necessary, model utility is sacrificed, as these parameters only formulate a theoretic upper bound that might not be reached when training an ML model with differentially private stochastic gradient descent on real-world data. If privacy parameters are too small, the trained model might be prone to reidentification attacks.

Several privacy regulations~\cite{Parliament2016,hhs} consider individual identifiability to gauge anonymization strength. Therefore, relatable scores that quantify reidentification risk to individuals can strongly affect the widespread implementation of anonymization techniques~\cite{Nissenbaum}. In consequence, if DP shall be used to comply with privacy regulations and find widespread adoption~\cite{Nissim2018, anonymity_opinion}, quantifying the resulting identifiability from privacy parameters \epsdlt is required~\cite{Nissim2018,Clifton2013}.

Multiple approaches for choosing privacy parameters have been introduced, yet they do not reflect identifiability~\cite{abowd2019,Hsu2014}, part from the original DP definition~\cite{YGF+18,Lee2012,RRL+18,Bernau2019}, or lack applicability to common DP mechanisms for ML~\cite{Lee2011}. Especially in ML, practical membership inference (MI) attacks have been used to measure identifiability~\cite{Bernau2019, RRL+18, Hayes2019, JE19, JWK+20, chen2020, shokri2017, YGF+18}. However, MI adversaries are not assumed to have auxiliary information about the members of datasets that they aim to differentiate, which DP adversaries are assumed to possess. 
MI attacks thus offer intuition about the outcome of practical attacks; nonetheless, bounds on MI attacks in terms of \eps are not tight~\cite{JE19}, and consequently MI can only represent an empirical lower bound on identifiability.

Rather than analyzing the MI adversary, we consider a DP adversary with arbitrary auxiliary knowledge and derive maximum \textit{Bayesian posterior belief} $\rb$ as an identifiability bound related to \epsdlt, which bounds the adversary's certainty in identifying a member of the training data. Furthermore, we define the complementary score \textit{expected membership advantage} $\ra$, which is related to the probability of success in a Bernoulli trial over the posterior beliefs. $\ra$ depends on the entire distribution of observed posterior beliefs, not solely the worst case posterior belief, and allows direct comparison with the membership advantage bound of Yeom et al.~\cite{YGF+18} for the MI adversary. We will show that the DP adversary achieves greater membership advantage than an MI adversary, implying that while both adversaries can be used to evaluate the protection of DP in machine learning, our implementable instance of the DP adversary comes closer to DP bounds.

A subsequent question is whether our identifiability bounds are tight in practice, since the factual guarantee \epsdlt~depends on the difference between possible input datasets~\cite{Nissim2007}. In differentially private stochastic gradient descent, noise is scaled to global sensitivity, the maximum change that any single record in the training dataset is assumed to cause on the gradient during any training step. However, since all training data records are likely to be within the same domain (e.g., pictures of cars vs.~pictures of nature scenes), global sensitivity might far exceed the difference between gradients over all training steps. 
We propose scaling the sensitivity to the difference between the gradients of a fixed dataset and any neighboring dataset and show for three reference datasets that we can indeed achieve tight bounds.
Our main contributions are:
	\begin{itemize}
		\item Identifiability bounds for the posterior belief and expected membership advantage that are mathematical transformations of privacy parameters \epsdlt and used in conjunction with RDP composition.
		\item The practical implementation of an adversary that meets all assumptions on worst-case adversaries against DP and allows us to audit DPSGD model instances w.r.t.~to the empirical privacy loss besides enabling comparison with membership inference adversaries.
		\item A heuristic for scaling sensitivity in differentially private stochastic gradient descent. This heuristic leads to tight bounds on identifiability.
	\end{itemize}
	
This paper is structured as follows. Preliminaries are presented in Section~\ref{sec:preliminaries}. We formulate identifiability scores and provide upper bounds on them in Sections~\ref{sec:interpreting-dp} and~\ref{sec:up_bound}. Section \ref{sec:tight} specifies the application of these scores for a deep learning scenario, and we evaluate the scores for three deep learning reference datasets in Section~\ref{sec:eval}. Section~\ref{sec:disc} discusses the practical relevance of our findings. We present related work and conclusions in Sections~\ref{sec:rel} and~\ref{sec:conc}.
\section{Preliminaries}
\label{sec:preliminaries}
\subsection{Differential Privacy}
\label{sec:dp}

If the evaluation of a function $f: \cali{U} \rightarrow \cali{R}$ on a dataset \cali{D} from domain $\cali{U}$ yields a result $r$, $r$ inevitably leaks information about the entries $\datapoint \in \cali{D}$ (cf.~impossibility of Dalenius' desideratum~\cite{dwork2006a}). DP~\cite{dwork2006a} offers an anonymization guarantee for statistical query functions $f(\cdot)$, perturbing $r$ such that the result could have been produced from dataset \cali{D} or some \textit{neighboring} dataset \cali{D'}. 
A neighboring dataset \cali{D'} either differs from \cali{D} in the presence of one additional data point (unbounded DP) or in the value of one data point when a data point from \cali{D} is replaced by another data point (bounded DP). 
In the context of this work, we will consider w.l.o.g.~ unbounded DP where $\cali{D}$ contains one data point $\datapoint$ more than $\cali{D'}$ and $\cali{D}\setminus\cali{D'}=\datapoint$. To achieve differential privacy, noise is added to the result of $f(\cdot)$ by \textit{mechanisms} \cali{M} according to Definition~\ref{def:DP}. The impact of a single member $\datapoint \in \cali{D}$ on $f(\cdot)$ is bounded. If this impact is low compared to the noise specified by DP, plausible deniability is provided to this member of $\cali{D}$, even if $\cali{D}$ and the members' properties $\datapoint$ (and thus also $\cali{D}'$) are known. For example, a single individual participating in a private analysis based on a census income dataset such as Adult~\cite{Koh96} could therefore plausibly deny census participation and values of personal attributes. 
DP provides a strong guarantee, since it protects against a strong adversary with knowledge of up to all points in a dataset except one. As Definition~\ref{def:DP} is an inequality, the privacy parameter $\eps$ can be interpreted as an upper bound on privacy loss.

\begin{definition}[\epsdlt-Differential Privacy~\cite{dwork2006b}]
	\label{def:DP}
A mechanism \cali{M} preserves \epsdlt-differential privacy if for all independently sampled $\cali{D},\cali{D'}\subseteq \cali{U}$, where $\cali{U}$ is a finite set, with $\cali{D}$ and $\cali{D'}$ differing in at most one element, and all possible mechanism outputs $\cali{S}$
	 \begin{equation*}
	 \Pr(\cali{M}(\cali{D}) \in \cali{S}) \leq e^{\eps} \cdot \Pr(\cali{M}(\cali{D'}) \in \cali{S}) + \dlt
	 \end{equation*}
\end{definition}

The Gaussian mechanism is the predominant DP mechanism in ML for perturbing the outcome of stochastic gradient descent and adds noise independently sampled from a Gaussian distribution centered at zero. Prior work~\cite{dwork2014} has analyzed the tails of the normal distributions and found that bounding the standard deviation as follows fulfills \epsdlt-DP:
\begin{equation} 
\label{eq:sigmaEpsDlt}
\sgm > \sens\sqrt{2\ln(1.25/\dlt)}/\eps
\end{equation}

Rearranged to solve for $\epsilon$, this is:
\begin{equation}
    \label{eq:eps_gauss}
    \eps > \sens\sqrt{2\ln(1.25/\dlt)}/\sgm
\end{equation}

$\sgm$ depends not only on the DP guarantee, but also on a scaling factor $\sens$. $\sens$ is commonly referred to as the sensitivity of a query function $f(\cdot)$ and comes in two forms: global sensitivity $GS_f$ and local sensitivity $LS_f$. DP holds if mechanisms are scaled to $GS_f$ of Definition~\ref{def:global_sensitivity}, i.e.,~the maximum contribution of a record in the dataset to the outcome of $f(\cdot)$.

\begin{definition}[Global Sensitivity]
\label{def:global_sensitivity}
Let $\cali{D}$ and $\cali{D'}$ be neighboring. For a given finite set $\cali{U}$ and function $f$ the global sensitivity $GS_f$ with respect to a distance function is 
\begin{equation*}
GS_f = \max\limits_{\cali{D}, \cali{D'}}||f(\cali{D})-f(\cali{D}')||
\end{equation*}
\end{definition}

For the Gaussian mechanism, we use the global $\ell_2$-sensitivity $GS_{f_2}$. Local sensitivity is specified in Definition~\ref{def:local_sensitivity}~\cite{Nissim2007} and fixes dataset $\cali{D}$. Note that the absolute $GS_f$ as of Definition~\ref{def:global_sensitivity} can also be defined relative to local sensitivity, as $GS_f= \max\limits_\cali{D} LS_f(\cali{D})$. The impact of $LS_f$ is that, compared to using $GS_f$, less noise is added when $\eps$ is held constant, and \eps is decreased when the noise distribution is held constant.

\begin{definition}[Local Sensitivity]
\label{def:local_sensitivity}
Let $\cali{D}$ and $\cali{D'}$ be neighboring. For a given finite set $\cali{U}$, independently sampled dataset $ \cali{D}\subseteq$ $\cali{U}$, and function $f$, the local sensitivity $LS_f(\cali{D})$ with respect to a distance function is 
\begin{equation*}
LS_f(\cali{D}) = \max\limits_{\cali{D}'}||f(\cali{D})-f(\cali{D}')||
\end{equation*}
\end{definition}

In differentially private stochastic gradient descent, perturbed outputs are released repeatedly in an iterative process. \cali{M} is represented by a differentially private version of an ML optimizer such as Adam or SGD. The most basic form of accounting multiple data releases is sequential composition, which states that for a sequence of $k$ mechanism executions each providing ($\eps_i$, $\dlt_i$)-DP, the total privacy guarantee composes to ($\sum_{i}\eps_{i}$, $\sum_{i}\dlt_{i}$)-DP; however, sequential composition adds more noise than necessary~\cite{abadi2016,Mironov2017}. 

A tighter analysis of composition is provided by Mironov~\cite{Mironov2017}. $(\alpha,\eps_{RDP})-$R{\'e}nyi differential privacy (RDP), with $\alpha >1$ quantifies the difference in distributions $\cali{M}(\cali{D}), \cali{M}(\cali{D'})$ by their R{\'e}nyi divergence~\cite{vanErven2010}. For a sequence of $k$ mechanism executions each providing ($\alpha$, $\eps_{RDP, i}$)-RDP, the privacy guarantee composes to ($\alpha$, $\sum_{i}\eps_{RDP, i}$)-RDP. The ($\alpha$, $\eps_{RDP}$)-RDP guarantee converts to $(\eps_{RDP}-\frac{\ln\dlt}{\alpha-1},\dlt)$-DP. The Gaussian mechanism is calibrated to RDP by:

\begin{equation}\label{eq:gaussrdp}
    \eps_{RDP} = \alpha \cdot \sens^2/2\sgm^2
\end{equation}

\subsection{Differentially Private Machine Learning} \label{subsec:problemStatement}
In machine learning a neural network (NN) is commonly provided a training dataset $\cali{D}$ where each of the data points $(x,y)\in\cali{D}$ consists of the features $x$ and the label $y$. The goal is to learn a prediction function using an optimizer. A test set is used to evaluate generalization and utility of the trained model. This paper focuses on applying DP to stochastic gradient descent optimizers that output a gradient vector, which corresponds to the output of function $f$ in DP.
A variety of differentially private stochastic gradient descent (DPSGD) optimizers are available for deep learning, all of which depend on the privacy parameters $\epsdlt$ and the clipping norm $\clip$~\cite{abadi2016,MAE+19}. DPSGD updates weights $\theta_i$ of the NN per training step $i \in k$ with $\theta_i \leftarrow \theta_{i-1}-\eta\cdot \dpgrad_i$, where $\eta>0$ is the learning rate. Differential privacy is achieved by perturbing the gradient $\dpgrad_i = \cali{M}_i(\cali{D})$ with Gaussian noise. To limit the sensitivity $\sens$, the length of each per-example gradient is limited to the clipping norm \clip before perturbation, and the Gaussian perturbation is proportional to \clip.

\subsection{Membership Inference}
\label{subsec:mi}

Membership inference (MI) is a threat model for quantifying how accurately an adversary can identify members of the training data in ML. Yeom et al.~\cite{YGF+18} formalize MI in the following experiment:

\begin{experiment} (Membership Inference $\emi$)
	Let $\Ami$ be an adversary, $\cali{M}$ be a differentially private learning algorithm, $n$ be a positive integer, and $\var{Dist}$ be a distribution over data points $(x,y)$. Sample $\cali{D}\sim\var{Dist}^n$ and let $\vec{r}=\cali{M}(\cali{D})$. The membership experiment proceeds as follows:
	\begin{enumerate}
	    \item Sample $z_{\cali{D}}$ uniformly from $\cali{D}$ and $z_{\var{Dist}}$ from $\var{Dist}$
		\item Choose $b\leftarrow\{0,1\}$ uniformly at random
		\item Let
		\[
		z= \begin{cases}
		z_{\cali{D}} \quad & \mathrm{if} \ $b=1$\\
		z_{\var{Dist}} & \mathrm{if} \ $b=0$
		\end{cases}
		\]
		\item $\Ami$ outputs $b'= \Ami(\vec{r}, z, \var{Dist}, n, \cali{M}) \in \{0,1\}$. If $b'=b$, $\Ami$ succeeds and the output of the experiment is 1. It is 0 otherwise.
	\end{enumerate}
	\label{exp:MI}
\end{experiment}

\subsection{Differential Identifiability and the relation to the DP adversary}
\label{subsec:adversary}

Lee et al.~\cite{Lee2011, Lee2012} introduce differential identifiability (DI) as a strong inference threat model. DI assumes that the adversary calculates the likelihood of all possible input datasets, so-called \textit{possible worlds} in a set $\Psi$, given a mechanism output $r$. Li et al.~\cite{Li13} show that the DI threat model maps to the worst case against which bounded DP protects when $|\Psi| = 2$, since DP considers two neighboring datasets $\cali{D}$, $\cali{D'}$ by definition. 
The DI experiment $\edi$ is similar to $\emi$, since the adversary must decide whether the dataset contains the member that differs between the known $\cali{D'}$ and $\cali{D}$, or not. For comparison we reformulate DI as a cryptographic experiment:
\begin{experiment}
	(Differential Identifiability $\edi$) Let $\Astrong$ be an adversary, $\cali{M}$ be a differentially private learning algorithm, $\cali{D}$ and $\cali{D'}$ be neighboring datasets drawn mutually independently from distribution $\var{Dist}$, using either bounded or unbounded definitions. The differential identifiability experiment $\edi$ proceeds as follows:
	\begin{enumerate}
	    \item Set $\vec{r}_{{D}}:=\cali{M}(\cali{D})$ and $\vec{r}_{\cali{D'}}:=\cali{M}(\cali{D'})$
		\item Choose $b\leftarrow\{0,1\}$ uniformly at random
		\item Let 
\[
\vec{r}= \begin{cases}
\vec{r}_{\cali{D}}, \quad & \mathrm{if} \ b=1 \\
\vec{r}_{\cali{D'}}, & \mathrm{if} \ b=0
\end{cases}
\]
\item $\Astrong$ outputs $b'= \Astrong(\vec{r}, \cali{D},\cali{D'},
\cali{M}, \var{Dist}) \in \{0,1\}$. If $b'=b$, $\Astrong$ succeeds and the output of the experiment is 1. It is 0 otherwise.
	\end{enumerate}
	\label{exp:ident}
\end{experiment}

Since Experiment~\ref{exp:ident} precisely defines an adversary with access to arbitrary background knowledge of up to all but one record in $\cali{D}$ and $\cali{D'}$, $\Astrong$ is an implementable instance of the DP adversary~\cite{Dwork2016}. Compared to the MI adversary, the DI adversary is stronger, since $\Astrong$ knows the alternative dataset $\cali{D'}$ instead of only the distribution $\var{Dist}$ from which $\cali{D'}$ was chosen. The experiment defined above is general and applies to deep learning using gradient descent as follows: the knowledge of the mechanism $\cali{M}$ implies knowledge about the architecture of the NN and the learning parameters $\eta, \clip$, as well as number of iterations $k$. The experiment is formulated s.t.~it could be applied for a single iteration, and the output $\vec{r}$ of the mechanism is the perturbed gradient $\dpgrad_i$ from iteration $i$ of the NN training. However, after the entire learning process, consisting of $k$ rounds, $\Astrong$ has more information $R_k=(\vec{r}_0,\vec{r}_1,\ldots,\vec{r}_k)$ and therefore a higher chance to win Experiment~\ref{exp:ident}. In this case, the same value of $b$ is chosen in every round, since the training data is kept constant over all learning steps. This is the standard case considered in our paper and motivates the need for composition theorems. According Experiment~\ref{exp:ident}, the DI adversary could know almost all of the training data from a public dataset of census data, for example, and observe the NN gradient updates at every training step. The assumption that $\Astrong$ has access to all gradients during learning may seem overly strong; however, this setting is of theoretical interest, since the bounds that we prove for the DI adversary $\Astrong$ will also hold for weaker adversaries. Furthermore, the assumptions can be fulfilled in federated learning, for example. In federated learning, multiple data owners jointly train a global model by sharing gradients for their individual subsets of training data with a central aggregator. The aggregator combines the gradients and shares the aggregated update with all data owners. If $\Astrong$ participates as a data owner, $\Astrong$ is able to observe the joint model updates.
\section{Identifiability scores for DP} 
\label{sec:interpreting-dp}

In this section we formulate two scores for identifiability of individual training records when releasing a differentially private NN. The scores are compatible with DP under multidimensional queries and composition. 
\iffullversion We prove that protection against $\Astrong$ implies protection against $\Ami$. \else We show that if $\madi$ is bounded, then this bound also holds for $\mami$. Equivalently, we show that $\Astrong$ is stronger than $\Ami$ due to additional available auxiliary information. Concretely, $\Astrong$ knows both neighboring datasets $\cali{D}$ and $\cali{D'}$ instead of only receiving one value $z$ and the size $n$ of the dataset from which the data points are drawn. This leads us to Proposition~\ref{prop:reduction}, which we formally prove by reduction in an extended version of this paper~\cite{BEGK21}.
\begin{proposition}\label{prop:reduction}
    DI implies MI: if $\Ami$ wins $\emi$, then one can construct $\Astrong$ that wins $\edi$.
\end{proposition}
\fi{}
We define \textit{posterior belief} $\beta$, which quantifies identifiability for iterative mechanisms in Section~\ref{sec:post_bel}. Second, we define \textit{membership advantage} $\madi$ for $\Astrong$ in Section~\ref{sec:adv}, which is a complementary identifiability score offering a scaled quantification of the adversary's probability of success. 
\iffullversion
\subsection{Relation of Membership Inference and Differential Identifiability}
\label{sec:adversary}

We first show that if $\madi$ is bounded, then this bound also holds for $\mami$. Equivalently, we show that $\Astrong$ is stronger than $\Ami$ due to additional available auxiliary information. Concretely, $\Astrong$ knows both neighboring datasets $\cali{D}$ and $\cali{D'}$ instead of only receiving one value $z$ and the size $n$ of the dataset from which the data points are drawn.

\begin{proposition}\label{prop:reduction}
    DI implies MI: if $\Ami$ wins $\emi$, then one can construct $\Astrong$ that wins $\edi$.
\end{proposition}

\begin{proof}
    We prove the proposition by contradiction: assume that the mechanism $\cali{M}$ successfully protects against $\Astrong$, but that there exists an adversary $\Ami$ that wins $\emi$. Again, we assume w.l.o.g.~that $\cali{D}\setminus\cali{D'}\neq\{\}$. We construct an adversary $\Astrong$ that also wins $\edi$ as follows:
    \begin{enumerate}
        \item On inputs $\cali{D}, \cali{D'}, \cali{M}, \vec{r}$, $\var{Dist}$, $\Astrong$ calculates $n=|\cali{D}|$ and let $z=\cali{D}\setminus\cali{D'}$.
        \item $\Astrong$ gives $(z, \vec{r}, n, \var{Dist})$ to $\Ami$.
        \item $\Ami$ gives $b''=\Ami(z, \vec{r}, n, \var{Dist})$ to $\Astrong$ in response.
        \item $\Astrong$ outputs $b'=b''$.
    \end{enumerate}
    By the definition of $\edi$, $\Astrong$ wins if $b'=b$, and thus succeeds in the following cases:\\
    \textbf{Case 1:} $b=1$, which means $\vec{r}=\cali{M}(\cali{D})$. Since $z\in \cali{D}$, this is exactly the case where $\Ami$ correctly outputs $b''=1$. Therefore $b'=b$.\\
   \textbf{Case 2:} $b=0$, which means $\vec{r}=\cali{M}(\cali{D'})$. Since $z\notin\cali{D'}$, this is exactly the case where $\Ami$ correctly outputs $b''=0$. Therefore $b'=b$.
    For both cases $\Astrong$ wins ($b'=b$), which contradicts the assumption that the mechanism $\cali{M}$ successfully protects against $\Astrong$. It is at least as difficult for a mechanism to protect against $\edi$ as against $\emi$, which is equivalent to the statement that if $\Ami$ wins $\emi$, then $\Astrong$ wins $\edi$ as well. 
\end{proof}
\else\fi{}

\subsection{Posterior Belief in Identifying the Training Dataset}
\label{sec:post_bel}

To quantify individual identifiability from privacy parameters \epsdlt, we use the Bayesian posterior belief. After having observed gradients $R_k$, the adversary $\Astrong$ can update the probabilities for both the training dataset $\cali{D}$ and the alternate dataset $\cali{D}'$, that differs from $\cali{D}$ in an individual record $\datapoint=\cali{D}\setminus \cali{D}'$. The posterior belief quantifies the certainty with which $\Astrong$ is able to identify the training dataset used by a NN and consequently the presence of the individual record $\datapoint$. This belief is formulated as a conditional probability depending on observations $R_k$ during training. 
For a census dataset such as Adult, the posterior belief measures the probability that a particular individual $\datapoint$ participated in the census after observing training using data $\cali{D}$. Since this belief has an upper bound for each possible member $x$ of the dataset, no member of $\cali{D}$ can be identified. 
Posterior belief therefore relates theoretical DP privacy guarantees to privacy regulations and societal norms through its identifiability formulation, since the noise, and therefore the posterior belief, depends on \epsdlt. 
\begin{definition}[Posterior Belief]

Consider the setting of Experiment~\ref{exp:ident} and denote $R_k=(\vec{r}_0,\vec{r}_1,\ldots,\vec{r}_k)$ as the result matrix, comprising $k$ multidimensional mechanism results. The posterior belief in the correct dataset $\cali{D}$ is defined as the probability conditioned on all the information observed during the adaptive computations
\[
\beta_k := \Pr(\cali{D}\vert R_k) = \frac{\Pr(\cali{D},R_k)}{\Pr(\cali{D},R_k)+\Pr(\cali{D'},R_k)}
\]
where the probability $\Pr(\cali{D}|R_k)$ is over the random iterative choices of the mechanisms up to step $k$.
\label{def:adapt-belief}
\end{definition}

Each $\beta_k$ can be computed from the previous $\beta_{k-1}$. The final belief can be computed using Lemma~\ref{lem:post_bel}, which we will use to further analyze the strongest possible attacker $\Astrong$ of Experiment~\ref{exp:ident}. 
\begin{lemma}[Calculation of the posterior belief]
	Assuming uniform priors and independent mechanisms $\cali{M}_i$ (more precisely, the noise of the mechanisms must be sampled independently), the posterior belief on dataset \cali{D} can be computed as
\begin{align*}
	\beta_k &= \frac{\prod_{i=1}^{k}\Pr(\cali{M}_{i}(\cali{D})=\vec{r}_i)}{\prod_{i=1}^{k}\Pr(\cali{M}_{i}(\cali{D})=\vec{r}_i)+\prod_{i=1}^{k}\Pr(\cali{M}_{i}(\cali{D'})=\vec{r}_i)}\\
	& = \frac{1}{1+\frac{\prod_{i=1}^{k}\Pr(\cali{M}_{i}(\cali{D'})=\vec{r}_i)}{\prod_{i=1}^{k}\Pr(\cali{M}_{i}(\cali{D})=\vec{r}_i)}}
\end{align*}
\label{lem:post_bel}
\end{lemma}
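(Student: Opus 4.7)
The plan is to apply Bayes' rule to the posterior $\Pr(\cali{D}\mid R_k)$, exploit the uniform prior assumption to cancel the prior factors, and then use the independence of the mechanisms to factor each conditional likelihood into a product over training steps.

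First, I would rewrite the posterior belief in Definition~\ref{def:adapt-belief} in its standard Bayesian form. Since $\cali{D}$ and $\cali{D'}$ are the only two hypotheses (this is a two-world setting as required by the DI experiment), Bayes' rule gives
\begin{equation*}
\beta_k = \Pr(\cali{D}\mid R_k) = \frac{\Pr(R_k\mid \cali{D})\,\Pr(\cali{D})}{\Pr(R_k\mid \cali{D})\,\Pr(\cali{D}) + \Pr(R_k\mid \cali{D'})\,\Pr(\cali{D'})}.
\end{equation*}
Under the uniform-prior assumption $\Pr(\cali{D}) = \Pr(\cali{D'}) = 1/2$, these prior factors cancel from numerator and denominator, leaving
\begin{equation*}
\beta_k = \frac{\Pr(R_k\mid \cali{D})}{\Pr(R_k\mid \cali{D}) + \Pr(R_k\mid \cali{D'})}.
\end{equation*}

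Next, I would use the independence assumption on the mechanism noise. Because the random coins driving $\cali{M}_1,\ldots,\cali{M}_k$ are sampled independently, and because each $\vec{r}_i$ is a deterministic function of the (fixed) dataset together with that mechanism's noise, the joint likelihood factors:
\begin{equation*}
\Pr(R_k\mid \cali{D}) = \prod_{i=1}^{k}\Pr(\cali{M}_i(\cali{D})=\vec{r}_i),
\qquad
\Pr(R_k\mid \cali{D'}) = \prod_{i=1}^{k}\Pr(\cali{M}_i(\cali{D'})=\vec{r}_i).
\end{equation*}
Substituting these products into the simplified Bayes expression yields the first displayed equality in the lemma. The second equality is obtained by dividing numerator and denominator of the first by $\prod_{i=1}^{k}\Pr(\cali{M}_i(\cali{D})=\vec{r}_i)$, which is a purely algebraic rearrangement.

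I do not expect a serious obstacle in this proof; it is essentially a two-line Bayesian calculation. The one subtlety worth being explicit about is justifying that the mechanisms $\cali{M}_i$ may be treated as conditionally independent given the dataset even though DPSGD is adaptive (the parameters $\theta_i$ depend on previous outputs). The resolution is that, conditional on the realized history $\vec{r}_1,\ldots,\vec{r}_{i-1}$, the mechanism at step $i$ is a fixed query whose only remaining randomness is its freshly sampled Gaussian noise; so the likelihood of $\vec{r}_i$ given $\cali{D}$ and the past equals $\Pr(\cali{M}_i(\cali{D})=\vec{r}_i)$ with the appropriate step-$i$ parameters, and the chain-rule telescoping recovers the product form above. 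This is the only place where the adaptivity of DPSGD needs a brief comment; once it is dispatched, the rest of the argument is immediate.
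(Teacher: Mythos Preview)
Your proof is correct and reaches the same conclusion, but by a different route than the paper. The paper proceeds by induction on $k$: it establishes the base case $\beta_1$ from the uniform prior, then at the inductive step treats $\beta_{k-1}$ itself as the new prior and applies a one-step Bayes update with the fresh observation $\vec{r}_k$, finally substituting the induction hypothesis and simplifying the resulting fraction. You instead apply Bayes' rule once to the whole observation vector $R_k$, cancel the uniform priors, and factor the joint likelihood via independence. Your argument is shorter and more transparent; the paper's iterative argument, on the other hand, makes explicit the recursive update $\beta_k = p_k\beta_{k-1}/(p_k\beta_{k-1}+p_k'(1-\beta_{k-1}))$, which is exactly how the adversary in Algorithm~\ref{alg:adversary_algo} maintains its belief in practice. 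Your closing remark about adaptivity (that conditioning on the realized history fixes the step-$i$ query so that only the fresh noise remains random) is a useful clarification that the paper's proof leaves implicit.
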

\iffullversion
\begin{proof} We prove the lemma by iteration over $k$.\\
$k=1$: We assume the attacker starts with uniform priors $\Pr(\cali{D})=\Pr(\cali{D'})=\frac{1}{2}$. Thus, $\beta_1(\cali{D}|R_1)$ can be directly calculated by dividing both numerator and denominator of $\beta$ by the numerator:
\begin{align*}
	\beta_1(\cali{D}|R_1) &=\frac{\Pr(\cali{M}_{1}(\cali{D})=\vec{r}_1)}{\Pr(\cali{M}_{1}(\cali{D})=\vec{r}_1)+\Pr(\cali{M}_{1}(\cali{D'})=\vec{r}_1)}\\
	& = \frac{1}{1+\frac{\Pr(\cali{M}_{1}(\cali{D'})=\vec{r}_1)}{\Pr(\cali{M}_{1}(\cali{D})=\vec{r}_1)}}
\end{align*}
$k-1\to k$: In the second step $\beta_{k-1}(\cali{D}|R_{k-1})$ is used as the prior, using the shorthand notations $\beta_k:=\beta_k(\cali{D}|R_k)$, and in the last step $p_k:=\Pr(\cali{M}_{k}(\cali{D})=\vec{r}_k)$ and $p_k':=\Pr(\cali{M}_{k}(\cali{D'})=\vec{r}_k)$ the calculation of $\beta_k(\cali{D}|R_k)$ starts as for the induction start $k=1$
\begin{align*}
	\beta_k & =\frac{\Pr(\cali{M}_{k}(\cali{D})=\vec{r}_k)\cdot\beta_{k-1}}{\Pr(\cali{M}_{k}(\cali{D})=\vec{r}_k)\cdot\beta_{k-1}+\Pr(\cali{M}_{k}(\cali{D'})=\vec{r}_k)\cdot(1-\beta_{k-1})}\\
	&=\frac{1}{1+\frac{\Pr(\cali{M}_{k}(\cali{D'})=\vec{r}_k)-\Pr(\cali{M}_{k}(\cali{D'})=\vec{r}_k)\cdot\beta_{k-1}}{\Pr(\cali{M}_{k}(\cali{D})=\vec{r}_k)\cdot\beta_{k-1}}} =\frac{1}{1+\frac{p_k'-p_k'\beta_{k-1}}{p_k \beta_{k-1}}}
\end{align*}

Now the induction assumption can be substituted for the right term of the denominator and then multiplying the numerator and denominator with $\prod_{i=1}^{k-1}p_i+\prod_{i=1}^{k-1}p_i'$ leads to 
\begin{align*}
    \frac{p_k'-p_k'\beta_{k-1}}{p_k \beta_{k-1}} &= \frac{p_k'-p_k'\frac{\prod_{i=1}^{k-1}p_i}{\prod_{i=1}^{k-1}p_i+\prod_{i=1}^{k-1}p_i'}}{p_k \frac{\prod_{i=1}^{k-1}p_i}{\prod_{i=1}^{k-1}p_i+\prod_{i=1}^{k-1}p_i'}} \\
    &= \frac{p_k'(\prod_{i=1}^{k-1}p_i+\prod_{i=1}^{k-1}p_i')-p_k' \prod_{i=1}^{k-1}p_i}
    {p_k \prod_{i=1}^{k-1}p_i}\\
    &= \frac{\prod_{i=1}^{k}p_i'}{\prod_{i=1}^{k}p_i}
\end{align*}
where in the last step the first and the third term in the denominator cancel and can be inserted back into the last form of $\beta_k$ above.
\end{proof}
\else \fi{}
In our analysis $\Astrong$ is a binary classifier that chooses the label with the highest posterior probability $\beta_k$. 
If prior beliefs are uniform, this decision process can be simplified. Consider $X_1:=\cali{M}(\cali{D}) \ \mathrm{and} \ X_0:=\cali{M}(\cali{D}')$. 
 Since $\Astrong$ knows $\cali{D}, \cali{D'}$ and $\cali{M}$, $\Astrong$ also knows the corresponding probability densities $g_{X_1}$ and $g_{X_0}$. The densities are identical and defined by $\cali{M}$, but are centered at the different results $f(\cali{D})$ and $f(\cali{D}')$, respectively, as visualized in Figure~\ref{fig:decision-boundary-pdfs} with $f(\cali{D})=0, f(\cali{D}')=1$. When $\Astrong$ has equal prior beliefs, $\Astrong$ decides whether $R_k$ is more likely to stem from $X_1$ or $X_0$ and therefore chooses 
\begin{equation} \label{eq:adversary}
\Astrong(R_k, \cali{D}, \cali{D'}, \cali{M}, \var{Dist}) = \argmax_{D\in\{\cali{D},\cali{D}'\}}{\beta(D|R_k)} = 
\argmax\limits_{b\in \{0,1\}}g_{X_b}(R_k)
\end{equation}

\begin{figure}
    \centering
	\begin{subfigure}{0.25\linewidth}
	\centering
		\includegraphics[width=1\linewidth]{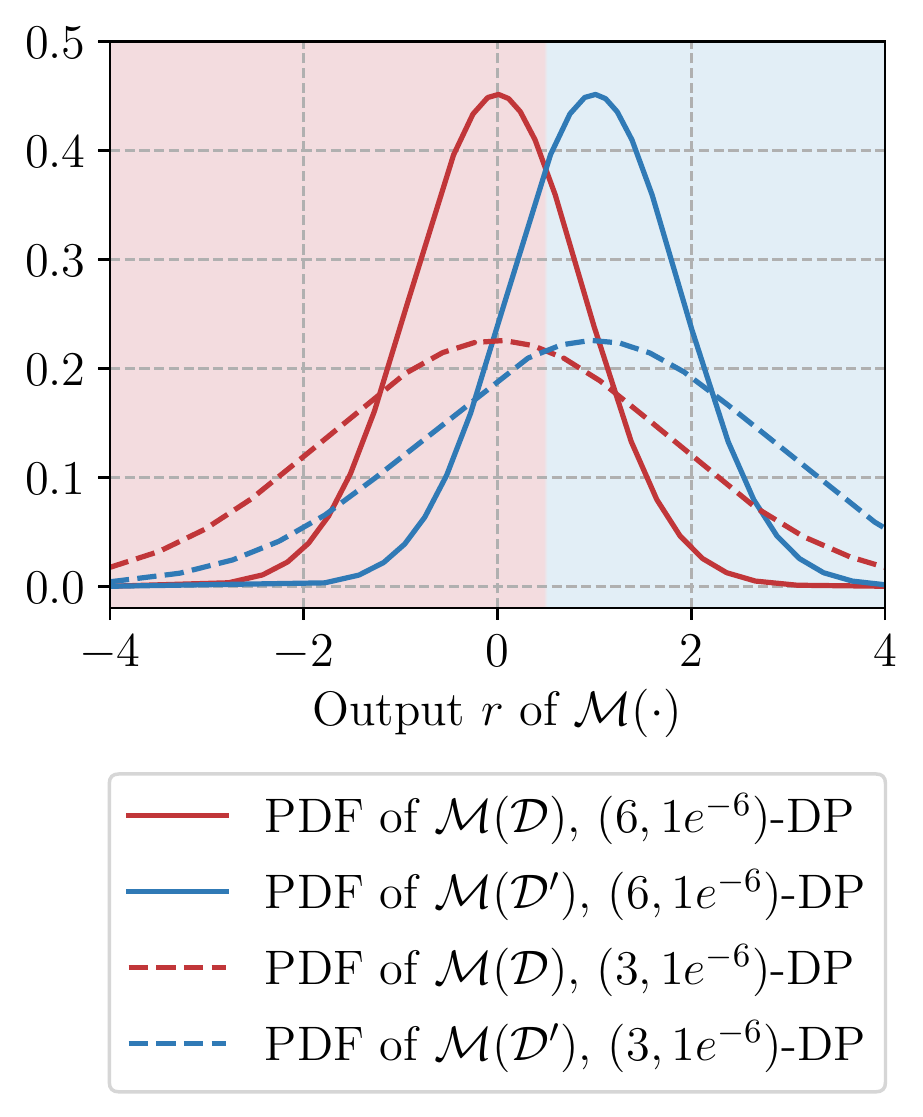}
		\caption{PDFs}
		\label{fig:decision-boundary-pdfs}
	\end{subfigure}%
	\begin{subfigure}{0.25\linewidth}
	\centering
		\includegraphics[width=1\linewidth]{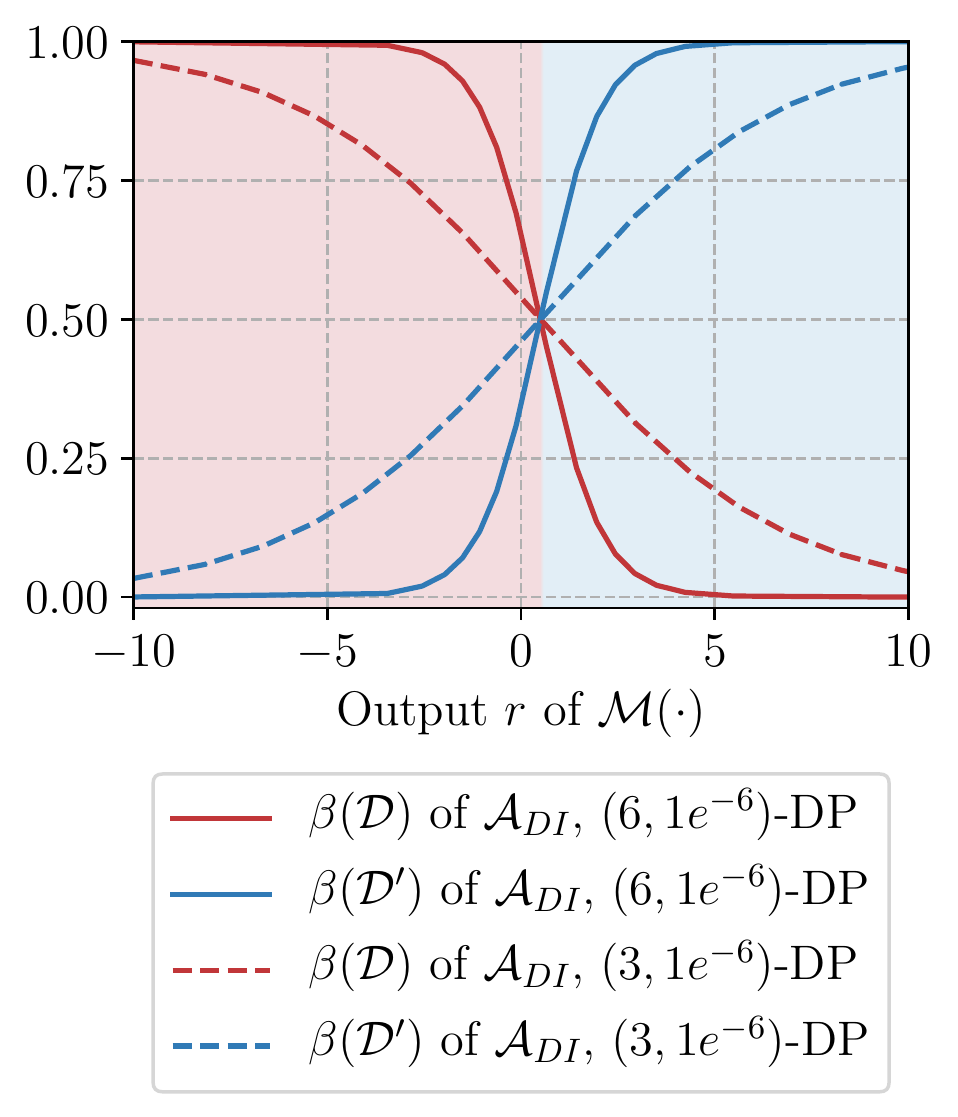}
		\caption{Posterior belief}
		\label{fig:decision-boundary-beliefs}
	\end{subfigure}%
	\caption{The decision boundary of $\Astrong$}
\end{figure}

$\beta(\cali{D})$ and $\beta(\cali{D'})$ for our example are visualized in Figure~\ref{fig:decision-boundary-beliefs}. $\Astrong$ acts as a naive Bayes classifier whose decision is depicted by the background color. The input features are the perturbed results $R_k$, and the exact probability distribution of each class is known. The distributions are entirely defined by $\cali{D}$, $\cali{D'}$, and $\cali{M}$, so $\Astrong$ does not use the knowledge of $\var{Dist}$. The posterior belief quantifies the probability of $R_k$; however, in another instance, $R_k$ could differ. In Section~\ref{subsec:adapt}, we will therefore define an upper bound on $\beta(\cali{D})$.

\subsection{Advantage in Identifying the Training Dataset}
\label{sec:adv}

The posterior belief $\beta_k$ quantifies the probability of inferring membership of a single record $\datapoint$. 
For example, when $\beta_k$ is low for a census dataset, the individual $\datapoint$ can plausibly deny presence in $\cali{D}$, and thus presence in the census.  
In practice, it is also important to know how often $\Astrong$ makes a correct guess, which only occurs when $\beta_k>0.5$. This is quantified by the advantage, which is the success rate normalized to the range $[-1,1]$, where $Adv=0$ corresponds to random guessing. Membership advantage was introduced to quantify the success of $\Ami$ \cite{YGF+18}; however, its definition can be used for $\Astrong$ of $\edi$. Generically: 

\begin{definition}[Advantage] \label{def:adv}
Given an experiment $Exp$ the advantage is defined as
\begin{equation*}
    Adv = 2\Pr(Exp=1)-1
\end{equation*}
where the probability is over the random iterative choices of the mechanisms up to step $k$. The advantage in $\edi$ is denoted $\madi$, while the advantage in $\emi$ is $\mami$. 
\end{definition}
\section{Derivation of upper bounds}
\label{sec:up_bound}
Within this section we use the DP guarantee to derive upper bounds for \textit{posterior belief} and \textit{advantage} in Sections~\ref{subsec:adapt} and~\ref{subsec:boundAdv}. In Section~\ref{sec:exp_ma}, we define expected membership advantage for the Gaussian mechanism, since the original bound is loose. 

\subsection{Upper Bound for the Posterior Belief} 
\label{subsec:adapt}

We formulate a generic bound on the Bayesian posterior belief that is independent of datasets $\cali{D}$ and $\cali{D'}$, the mechanism $\cali{M}$, and the result matrix $R=(\vec{r}_0,\vec{r}_1,\ldots,\vec{r}_k)$ comprising $k$ multidimensional mechanism outputs. The proposed bound solely assumes that the DP bound holds and makes no further simplifications, which results in an identifiability-based interpretation of DP guarantees. Theorem~\ref{cor:general-upper-bound} shows that $\Astrong$ operates under the sequential composition theorem, for both for $\eps$-DP and for $\epsdlt$-DP. 

\begin{theorem}[Bounds for the Adaptive Posterior Belief]
	\label{cor:general-upper-bound}
	Consider experiment $\edi$ with neighboring datasets $\cali{D}$ and $\cali{D'}$. Let	$\cali{M}_1,\ldots, \cali{M}_k$ be a sequence of arbitrary but independent differentially private learning algorithms.\\
	(i) Each $\cali{M}_i$ provides $\eps_1,\ldots,\eps_k$-DP to functions $f_i$ with multidimensional output.
	Then the posterior belief of $\Astrong$ is bounded by
	\begin{equation*}
	    \beta_k(\cali{D}|R_k) \le \rb = 
	    \frac{1}{1+e^{-\sum_{i=1}^{k} \eps_i }}
	\end{equation*}\\
     (ii) Each $\cali{M}_i$ provides $(\eps_i, \dlt_i)$-DP to multidimensional functions $f_i$. Then the same bound as above holds with probability $1-\sum_{i=1}^{k}\dlt_i$.
\end{theorem}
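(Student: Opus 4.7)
The plan is to start from Lemma~\ref{lem:post_bel}, which expresses the posterior belief as $\beta_k = 1/(1 + L_k)$ where $L_k := \prod_{i=1}^{k} \Pr(\cali{M}_i(\cali{D'}) = \vec{r}_i)/\Pr(\cali{M}_i(\cali{D}) = \vec{r}_i)$ is a product of per-step likelihood ratios. Since $t \mapsto 1/(1+t)$ is strictly decreasing, upper-bounding $\beta_k$ by $\rb$ is equivalent to establishing $L_k \ge e^{-\sum_i \eps_i}$, so the whole argument reduces to controlling each factor of $L_k$ from below.

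For part (i), each factor is controlled directly by pure $\eps_i$-DP. I would apply Definition~\ref{def:DP} with $\cali{D}$ and $\cali{D'}$ swapped, first to the singleton event $\{\vec{r}_i\}$ in the discrete case, and in the continuous case (e.g., the Gaussian mechanism) to small neighborhoods of $\vec{r}_i$ and pass to densities via the Radon--Nikodym derivative. Either route yields $\Pr(\cali{M}_i(\cali{D'}) = \vec{r}_i)/\Pr(\cali{M}_i(\cali{D}) = \vec{r}_i) \ge e^{-\eps_i}$. Multiplying over $i = 1, \ldots, k$ gives $L_k \ge \exp(-\sum_{i=1}^k \eps_i)$, and substitution into the expression from Lemma~\ref{lem:post_bel} delivers the stated bound.

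For part (ii), the obstacle is that $(\eps_i, \dlt_i)$-DP does not supply a pointwise multiplicative inequality on the likelihood ratio, so the per-step argument above breaks. My plan is to invoke the standard characterization that $(\eps_i, \dlt_i)$-DP implies the existence of a ``bad'' set $B_i$ of outputs with $\Pr(\cali{M}_i(\cali{D}) \in B_i) \le \dlt_i$ on whose complement the privacy loss $\ln(\Pr(\cali{M}_i(\cali{D})=\vec{r}_i)/\Pr(\cali{M}_i(\cali{D'})=\vec{r}_i))$ is at most $\eps_i$. Letting $E_i$ denote the event $\{\vec{r}_i \notin B_i\}$, the independence of the $\cali{M}_i$ in the theorem's hypothesis ensures each $E_i$ is measurable in its own coordinate, so a union bound gives $\Pr(\bigcap_{i=1}^k E_i) \ge 1 - \sum_{i=1}^k \dlt_i$. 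On this intersection the argument of part (i) applies verbatim, yielding the bound on $\beta_k$ with the claimed probability.

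The main obstacle is really just the careful bookkeeping in (ii): ensuring that the ``bad set'' decomposition composes across adaptive iterations without double-counting and that the union bound is taken under the correct measure. The theorem's explicit independence assumption on the $\cali{M}_i$ is precisely what makes this clean, since otherwise the events $E_i$ would need to be handled conditionally on the earlier $\vec{r}_{<i}$. The subsidiary issue of converting Definition~\ref{def:DP} (stated for measurable sets $\cali{S}$) into statements about point masses or densities is standard and only requires that the output space be nice enough to admit a dominating measure, which is satisfied by the Gaussian mechanism used throughout the paper.
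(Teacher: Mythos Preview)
Your argument for part (i) coincides with the paper's: both start from Lemma~\ref{lem:post_bel}, lower-bound each likelihood-ratio factor by $e^{-\eps_i}$ via the pure-DP inequality, and multiply.

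For part (ii) you take a genuinely different route. The paper does \emph{not} use a bad-set decomposition; instead it passes through R\'enyi DP, invoking the probability-preservation inequality $\Pr(\cali{M}(\cali{D})=\vec{r}) \le \bigl(e^{\eps_{RDP}}\Pr(\cali{M}(\cali{D'})=\vec{r})\bigr)^{1-1/\alpha}$ of Langlois et al., isolating an exceptional event of probability at most $\dlt_i$ from Mironov's RDP-to-$(\eps,\dlt)$ conversion, and then simplifying the exponent so that the RDP parameters collapse back to the $\eps_i$ of Definition~\ref{def:DP}. Your union-bound route is more elementary and sidesteps the RDP machinery entirely, which is attractive.

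There is, however, a subtle gap in your part (ii). The ``standard characterization'' you invoke---that $(\eps,\dlt)$-DP guarantees a bad set $B$ of mass at most $\dlt$ outside which the pointwise privacy loss is bounded by $\eps$---is the definition of \emph{probabilistic} (or \emph{pointwise}) DP, which is strictly stronger than $(\eps,\dlt)$-DP. The implication you need runs the wrong way: probabilistic DP implies $(\eps,\dlt)$-DP, but the converse fails in general, because the $\dlt$ slack in Definition~\ref{def:DP} may be spread thinly across all outputs rather than concentrated on a small bad set. (A two-point mechanism with masses $(1\pm\dlt)/2$ swapped between $\cali{D}$ and $\cali{D'}$ is $(0,\dlt)$-DP, yet the bad set has mass $(1+\dlt)/2$.) So as written your argument does not cover arbitrary $(\eps,\dlt)$-DP mechanisms. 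That said, the paper's own proof is not fully general either---it tacitly assumes an RDP bound is available---so both arguments really only cover mechanisms, such as the Gaussian, for which the stronger pointwise condition holds. If you want to keep your route, either restate the hypothesis as probabilistic DP, or note explicitly that the bad-set decomposition is valid for the Gaussian mechanism that the rest of the paper targets.
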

\iffullversion
\begin{proof}
(i) The adversary with unbiased prior (i.e., $0.5$) has a maximum posterior belief of $1/(1+e^{-\eps})$ when the \eps-differentially private Laplace mechanism is applied to a function with a scalar output~\cite{Lee2012}. This upper bound holds also for arbitrary \eps-differentially private learning algorithms with multidimensional output. We bound the general belief calculation by the inequality of Definition~\ref{def:DP}. Analogously, 
        $\Pr(\cali{M}(\cali{D})=\vec{r}) \leq e^\eps\Pr(\cali{M}(\cali{D'}) = \vec{r}) + \dlt$.
    Assuming equal priors, the posterior belief can be calculated as follows:
    \begin{align*}
    \beta(\cali{D}|R) &= \frac{1}{1+\frac{\prod_{i=1}^{k}\Pr(\cali{M}_{i}(\cali{D'})=\vec{r}_i)}{\prod_{i=1}^{k}\Pr(\cali{M}_{i}(\cali{D})=\vec{r}_i)}}\\ &
    \leq \frac{1}{1+\frac{\prod_{i=1}^{k}\Pr(\cali{M}_{i}(\cali{D'})=\vec{r}_i)}{\prod_{i=1}^{k}e^{\eps_i}\Pr(\cali{M}_i(\cali{D'}) = \vec{r}_i) + \dlt_i}}
    \end{align*}
    
    For $\dlt=0$, the last equation simplifies to: 
    \begin{align*}
        \beta(\cali{D}|R) & \leq \frac{1}{1+\frac{\prod_{i=1}^{k}\Pr(\cali{M}_{i}(\cali{D'})=\vec{r}_i)}{\prod_{i=1}^{k}e^{\eps_i}\Pr(\cali{M}_i(\cali{D'}) = \vec{r}_i) }} \\
        & = \frac{1}{1+\prod_{i=1}^{k}e^{-\eps_i} } = \frac{1}{1+e^{-\sum_{i=1}^{k} \eps_i }} = \rb
    \end{align*}
    
    (ii) We use properties of RDP to prove the posterior belief bound for multidimensional $(\eps_i, \dlt_i)$-differentially private mechanisms.
    \begin{align}
    	\beta_k(\cali{D}|R) & = \frac{1}{1+\prod_{i=1}^{k} \frac{\Pr(\cali{M}_{i}(\cali{D'})=\vec{r}_i)}{\Pr(\cali{M}_{i}(\cali{D})=\vec{r}_i)}} \label{eq:adaptive-gauss-basic}\\
    	& = \frac{1}{1 + \prod_{i=1}^{k} \frac{\Pr(\cali{M}_{i}(\cali{D'})=\vec{r}_i)}{\left(e^{\eps_{RDP,i}}\cdot\Pr(\cali{M}_i(\cali{D'}) = \vec{r}_i)\right)^{1-1/\alpha}}}\label{eq:prop10} \\
    	& = \frac{1}{1+\prod_{i=1}^{k} e^{-\eps_{RDP,i}(1-1/\alpha)}\cdot \Pr(\cali{M}_i(\cali{D'}) = \vec{r}_i)^{1/\alpha}} \label{eq:simplified_belief}
    	\end{align}
    	In the step from Eq.~\eqref{eq:adaptive-gauss-basic} to Eq.~\eqref{eq:prop10}, we use the probability preservation property,
    	    $\Pr(\mathcal{M}(\cali{D})=\vec{r}) \leq \left(e^{\epsilon_{RDP}}\Pr(\mathcal{M}(\cali{D'})=\vec{r})\right)^{1-1/\alpha}$, which appears in Langlois et al.~\cite{langlois2015} and generalizes Lyubashevsky et al.~\cite{Lyubashevsky2013}.
    	This same property was used by Mironov~\cite{Mironov2017} to prove that RDP guarantees can be converted to \epsdlt~guarantees. In the context of this proof, Mironov also implies that $\eps$-DP holds when $e^{\eps_{RDP}}\Pr(\cali{M}(\cali{D'}) = \vec{r})>\dlt^{\alpha/(\alpha-1)}$, since otherwise $\Pr(\cali{M}(\cali{D}) = \vec{r})\le\dlt$. We therefore assume $e^{\eps_{RDP}}\Pr(\cali{M}(\cali{D'}) = \vec{r})>\dlt^{\alpha/(\alpha-1)}$, which occurs in at least $1-\dlt$ cases. We continue from Eq.~\eqref{eq:simplified_belief}:
    	\begin{align}
    		\beta_k(\cali{D}) & \le \frac{1}{1+ \prod_{i=1}^{k} e^{-\eps_{RDP,i}(1-1/\alpha)}\cdot\left(\dlt_i^{\alpha/(\alpha-1)}\cdot e^{-\eps_{RDP,i}}\right)^{1/\alpha}} \nonumber \\
    		& = \frac{1}{1+ \prod_{i=1}^{k} e^{-\eps_{RDP,i}}\cdot\dlt_i^{1/(\alpha-1)}} \nonumber \\
    		& = \frac{1}{1+ \prod_{i=1}^{k} e^{-\eps_{RDP,i}}\cdot e^{-1/(\alpha-1)\ln
    (1/\dlt_i)}} \nonumber \\
    		& = \frac{1}{1 + \prod_{i=1}^{k} e^{-(\eps_{RDP,i}+(\alpha-1)^{-1}\ln
    (1/\dlt_i))}} \label{eq:simple_rdp_belief}\\
    	& = \frac{1}{1+\prod_{i=1}^{k} e^{-\eps_{i}}} = \frac{1}{1+e^{-\sum_{i=1}^{k}\eps_{i}}} = \rb \label{eq:adaptive-gauss-end}
    \end{align}
\end{proof}
\else \fi{}

Note that we use the conversion from RDP to DP in the step from Eq.~\ref{eq:simple_rdp_belief} to Eq.~\ref{eq:adaptive-gauss-end} (cf.~Section~\ref{sec:dp}). Equivalently one can specify a desired posterior belief and calculate the overall \eps, which can be spent on a composition of differentially private queries:
\begin{equation} \label{eq:epsFromBound}
    \eps = \ln\left(\frac{\rb}{1-\rb}\right)
\end{equation}

The value for \dlt can be chosen independently according to the recommendation that $\dlt\ll\frac{1}{N}$ with $N$ points in the input dataset~\cite{dwork2014}.

\subsection{Upper Bound for the Advantage in Identifying the Training Dataset for General Mechanisms}
\label{subsec:boundAdv}

We now formulate an upper bound for the advantage $\madi$ of $\Astrong$ in Proposition~\ref{prop:generalBoundAdv}. The membership advantage of $\Ami$ has been bounded in terms of $\eps$ and defines $\Ami$'s success~\cite{YGF+18}. The general bound for $\Ami$ also holds for $\Astrong$ based on Proposition~\ref{prop:reduction}. 

\begin{proposition}[Bound on the Expected Membership Advantage for $\Astrong$]\label{prop:generalBoundAdv}
     For any $\eps$-DP mechanism the identification advantage of $\Astrong$ in experiment $\edi$ can be bounded as
\begin{equation*}
\madi \leq(e^\eps-1) \Pr(\Astrong=1|b=0) 
\end{equation*}
\end{proposition}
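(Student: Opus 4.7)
The plan is to unpack the definition of advantage in terms of conditional probabilities on the bit $b$, then apply the $\eps$-DP guarantee to the resulting difference of probabilities. First I would write $\madi = 2\Pr(\edi = 1) - 1$ and condition on the uniform random bit $b$. Since $\Pr(\edi = 1) = \frac{1}{2}\Pr(\Astrong = 1 \mid b = 1) + \frac{1}{2}\Pr(\Astrong = 0 \mid b = 0)$, substituting $\Pr(\Astrong = 0 \mid b = 0) = 1 - \Pr(\Astrong = 1 \mid b = 0)$ and simplifying gives
\begin{equation*}
\madi = \Pr(\Astrong = 1 \mid b = 1) - \Pr(\Astrong = 1 \mid b = 0).
\end{equation*}

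Next I would use that, conditioned on $b = 1$, the input to $\Astrong$ is a sample of $\cali{M}(\cali{D})$, while conditioned on $b = 0$ it is a sample of $\cali{M}(\cali{D'})$. Because $\cali{D}$ and $\cali{D'}$ are neighboring and the background knowledge $(\cali{D}, \cali{D'}, \cali{M}, \var{Dist})$ is fixed, the set $\cali{S} := \{\vec{r} : \Astrong(\vec{r}, \cali{D}, \cali{D'}, \cali{M}, \var{Dist}) = 1\}$ is a post-processing event on the mechanism output, and Definition~\ref{def:DP} applies directly to yield
\begin{equation*}
\Pr(\Astrong = 1 \mid b = 1) = \Pr(\cali{M}(\cali{D}) \in \cali{S}) \le e^{\eps}\Pr(\cali{M}(\cali{D'}) \in \cali{S}) = e^{\eps}\Pr(\Astrong = 1 \mid b = 0).
\end{equation*}

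Substituting this bound into the expression for $\madi$ gives $\madi \le (e^{\eps} - 1)\Pr(\Astrong = 1 \mid b = 0)$, which is the desired inequality. The argument follows the template of Yeom et al.'s membership-advantage bound; the only subtlety is noting that $\Astrong$'s auxiliary inputs $(\cali{D}, \cali{D'}, \cali{M}, \var{Dist})$ are independent of the mechanism's internal randomness, so the adversary's decision rule really is a (possibly randomized) post-processing of $\vec{r}$ and the DP inequality can be applied on a per-realization basis to the set $\cali{S}$. I do not expect a serious obstacle here; if anything, the main care is to keep the conditioning clean so that the DP inequality is applied to fixed neighboring datasets rather than to the random $\cali{D}, \cali{D'}$ drawn from $\var{Dist}$, and to observe that any internal randomness of $\Astrong$ can be absorbed by extending the post-processing argument.
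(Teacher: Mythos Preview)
Your proof is correct and follows essentially the same route as the paper: both derive $\madi = \Pr(\Astrong=1\mid b=1)-\Pr(\Astrong=1\mid b=0)$ from the definition of advantage and then apply the $\eps$-DP inequality to bound the first term by $e^{\eps}$ times the second. The only cosmetic difference is that the paper passes through densities $g_{X_1},g_{X_0}$ and writes the difference as an integral before invoking $g_{X_1}\le e^{\eps}g_{X_0}$ pointwise, whereas you apply Definition~\ref{def:DP} directly to the event $\cali{S}=\{\vec r:\Astrong(\vec r,\ldots)=1\}$; your phrasing is slightly cleaner since it avoids assuming densities, but the argument is the same.
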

\iffullversion
\begin{proof}
First the definition is rewritten by separating true positives and true negatives. Then using that both datasets are equally likely to be chosen by the challenger ($\Pr(b=1)=\Pr(b=0) = 1/2$). We substitute $\Pr(b'=0 \ \vert b=0)$ by the probability of the complementary event $1-\Pr(b'=1 \ \vert b=0))$ and $b'=1$ by $\Astrong=1$, which leads to Eq.~\eqref{eq:TPminusFN}‚ of Yeom et al.~\cite{YGF+18}
\begin{align}
	\madi &= 2(\Pr(b=1)\Pr(b'=1 \ \vert b=1) \nonumber\\
	& \hphantom{=} +\Pr(b=0)\Pr(b'=0 \ \vert b=0))-1 \nonumber\\
	&=\Pr(\Astrong=1 | b=1) - \Pr(\Astrong=1|b=0) 
	\label{eq:TPminusFN}
\end{align}
which is the difference between the probability for detecting $\cali{D}$ and the probability of incorrectly choosing $\cali{D}$. Now we use the fact that the mechanism $\cali{M}$ turns $r$ into random variables $X_1:=\cali{M}(\cali{D})$ and $X_0:=\cali{M}(\cali{D'})$ for the cases $b=1$ and $b=0$, respectively. We formulate the probability density functions as $g_{X_1}$ and $g_{X_0}$. Additionally $A(r)$ is introduced as a shorthand for $\Astrong(\vec{r}, \cali{D}, \cali{D'}, \cali{M}, \var{Dist})$

\begin{align}
	\madi &=\Pr(\Astrong=1 | r=\cali{M}(\cali{D})) - \Pr(\Astrong=1|r=\cali{M}(\cali{D'})) \nonumber\\
&=\mathbb{E}_{r=\cali{M}(\cali{D})}
 (\Astrong(\vec{r}, \cali{D}, \cali{D'}, \cali{M}, \var{Dist})) - \nonumber
	\\ & \hphantom{=} \mathbb{E}_{r=\cali{M}(\cali{D'})} (\Astrong(\vec{r}, \cali{D}, \cali{D'}, \cali{M}, \var{Dist})) \nonumber\\
& = \int g_{X_1}(\vec{r}) A(\vec{r})\mathrm{d}\vec{r} -\int g_{X_0}(\vec{r}) A(\vec{r}) \mathrm{d}\vec{r} \label{eq:diffOfDistr}\\
& = \int (g_{X_1}(\vec{r})-g_{X_0}(\vec{r})) A(\vec{r}) \mathrm{d}\vec{r} \label{eq:distrDiff}
\end{align}

Since $\eps$-DP formulated as $Pr(\cali{M}(\cali{D})\in S)\leq e^\eps Pr(\cali{M}(\cali{D'})\in S)$ holds for all $S$ it yields the same inequality $g_{X_1} \leq e^\eps g_{X_0}$ for the densities at each point
\begin{align}
	\madi	&\leq (e^\eps-1)\int g_{X_0}(\vec{r}) A(r) \mathrm{d}\vec{r} \nonumber \\
	    & = (e^\eps-1) \Pr(\Astrong=1|b=0) \nonumber \\
	    & \leq e^\eps - 1 \nonumber
\end{align}
\end{proof}
\else \fi{}
Bounding $\Pr(\Astrong=1|b=0)$ by 1 results in $\madi \leq e^\eps - 1$. When $\Astrong$ acts as a naive Bayes classifier, only a complete lack of utility from infinite noise results in $\Pr(\Astrong=1|b=0)=0.5$. Otherwise, $\Pr(\Astrong=1|b=0)\ll0.5$; therefore, the membership advantage bound is usually not tight. This is in line with Jayaraman et al.~\cite{JE19} who expect that this would be the case for MI.

\subsection{Upper Bound for the Advantage in Identifying the Training Dataset for Gaussian Mechanisms}
\label{sec:exp_ma}
In practice, $\Astrong$ will be faced with a specific DP mechanism, and we focus on the mechanism used in DPSGD to find a tighter bound than the generic bound described in the previous section.
We use the notation $\AstrongGau$ and $\madiGau$ to specify the adversary and advantage of an instantiation of $\Astrong$ against the Gaussian mechanism with \epsdlt-DP. 
\iffullversion We now derive a tighter bound $\ra$ on $\madiGau$ and continue from Eq.~\eqref{eq:distrDiff}. \else We now derive a tighter bound $\ra$ on $\madiGau$. \fi{} Note that under the assumption of equal priors, the strongest possible adversary of Eq.~\eqref{eq:adversary} \iffullversion maximizes Eq.~\eqref{eq:distrDiff} by choosing \else chooses \fi{} $b=1$ if $(g_{X_1}(\vec{r})-g_{X_0}(\vec{r}))>0$ and $b=0$ otherwise. The resulting bound on $\madiGau$ is constructed from $\AstrongGau$'s strategy; however, the bound holds for all weaker adversaries, including $\Ami$. Since we argue that $\AstrongGau$ precisely represents the assumptions of DP, the bound should hold for other possible attacks in the realm of DP and the Gaussian mechanism under the i.i.d.~assumption.

Since $\AstrongGau$ is a naive Bayes classifier with known probability distributions, we use the properties of normal distributions (we refer to Tumer et al.~\cite{Tumer1996} for full details). We find that the decision boundary does not change under $\cali{M}_{Gau}$ with different \epsdlt~guarantees as long as the probability density functions (PDF) are symmetric. Holding $\cali{M}(\cali{D})=r$ constant and reducing \epsdlt~solely affects the posterior belief of $\AstrongGau$, not the choice of $\cali{D}$ or $\cali{D'}$. For example, consider the example of Figure~\ref{fig:compare_distributions_epsilons}. If a $(6,10^{-6})$-DP $\cali{M}_{Gau}$ is applied for perturbation, $\AstrongGau$ has to choose between the two PDFs in Figure~\ref{fig:error-rate-pdfs-eps6}. Increasing the privacy guarantee to $(3,10^{-6})$-DP in Figure~\ref{fig:error-rate-pdfs-eps3} squeezes the PDFs and belief curves. The corresponding regions of error are shaded in Figures~\ref{fig:error-rate-pdfs-eps6} and~\ref{fig:error-rate-pdfs-eps3}, where we see that a stronger guarantee reduces $\madiGau$. 
\begin{figure}
    \centering
	\begin{subfigure}{0.25\linewidth}
		\includegraphics[width=1\linewidth]{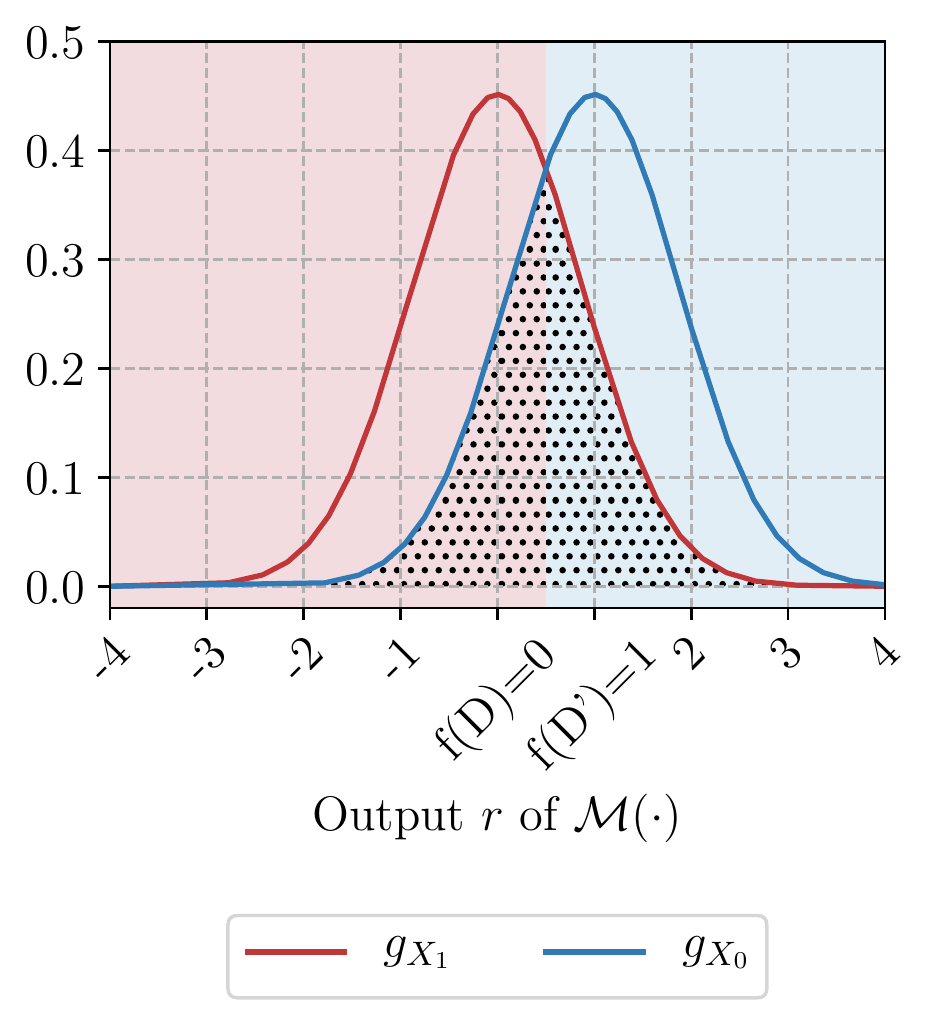}
		\caption{Error regions $(6,1e^{-6})$-DP}
		\label{fig:error-rate-pdfs-eps6}
	\end{subfigure}%
	\begin{subfigure}{0.25\linewidth}
		\includegraphics[width=1\linewidth]{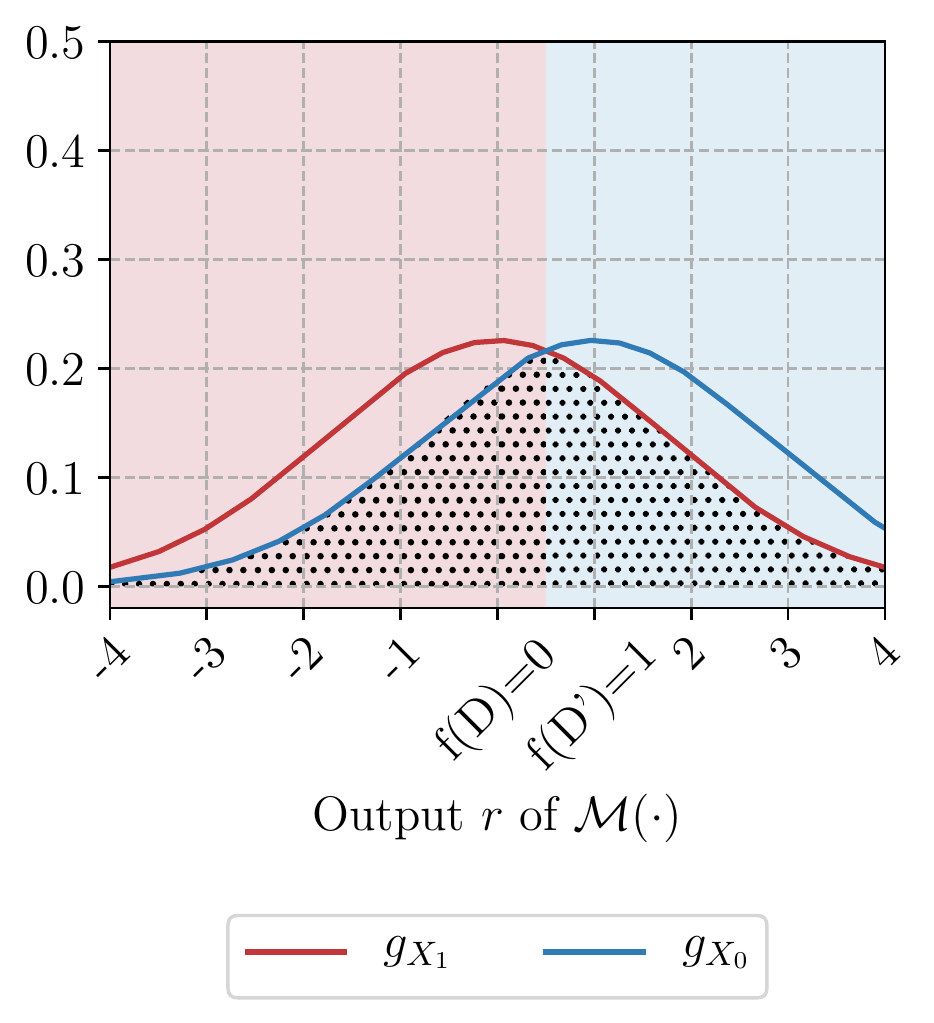}
		\caption{Error regions $(3,1e^{-6})$-DP}
		\label{fig:error-rate-pdfs-eps3}
	\end{subfigure}%
	\caption{error regions for varying \eps,~$\cali{M}_{Gau}$}
	\label{fig:compare_distributions_epsilons}
\end{figure}
 
We assume throughout this paper that $\AstrongGau$ has uniform prior beliefs on the possible databases $\cali{D}$ and $\cali{D'}$. This distribution is iteratively updated based on the posterior resulting from the mechanism output $r$. If $\cali{M}_{Gau}$ is used to achieve \epsdlt-DP, we can determine the expected membership advantage of the practical attacker $\AstrongGau$ analytically by the overlap of the resulting Gaussian distributions~\cite[p.~321]{MKB79}.
We thus consider two multidimensional Gaussian PDFs (i.e., $\cali{M}(\cali{D})$, $\cali{M}(\cali{D'})$) with covariance matrix $\Sigma$ and means (without noise) $\vec{\mu_1}=f(\cali{D}),\vec{\mu_2}=f(\cali{D'})$. This leads us to Theorem~\ref{def:expected-adversarial-success-rate}. 
\begin{theorem}[Tight Bound on the Expected Adversarial Membership Advantage]
	\label{def:expected-adversarial-success-rate}
	For the \epsdlt-differentially private Gaussian mechanism, the expected membership advantage of the strong probabilistic adversary on either dataset $\cali{D}, \cali{D'}$. 
	\begin{align*}
	    \madi \leq \ra = 2\Phi\left(\frac{\eps}{2 \sqrt{2\ln
(1.25/\dlt)}}\right) - 1
	\end{align*}
	where $\Phi$ is the cumulative density function of the standard normal distribution.
\end{theorem}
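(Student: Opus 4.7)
The plan is to identify $\AstrongGau$ explicitly as a Bayes-optimal classifier between two multivariate Gaussians with a common covariance, evaluate its success probability in closed form via a reduction to one dimension, and then plug in the Gaussian mechanism's calibration from Equation~\eqref{eq:sigmaEpsDlt} to obtain the stated expression.

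First I would fix notation consistent with Section~\ref{sec:post_bel}: conditional on $b=1$, $\vec r$ is drawn from $g_{X_1}=\mathcal{N}(f(\cali{D}),\sigma^2 I)$, and conditional on $b=0$ from $g_{X_0}=\mathcal{N}(f(\cali{D'}),\sigma^2 I)$. Under uniform priors, Equation~\eqref{eq:adversary} shows that $\AstrongGau$ is the Bayes-optimal classifier between these two densities; i.e.\ it selects $b=1$ on the half-space $\{\vec r : g_{X_1}(\vec r)\ge g_{X_0}(\vec r)\}$. Rewriting $\madiGau$ as in Equations~\eqref{eq:diffOfDistr}--\eqref{eq:distrDiff}, this choice of $A(\cdot)$ maximizes the integrand pointwise, so the resulting value of $\madiGau$ coincides with the total-variation distance between $g_{X_1}$ and $g_{X_0}$, and equals $1-2\operatorname{Err}$, where $\operatorname{Err}$ is the Bayes error.

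Next I would compute $\operatorname{Err}$ by reducing to a one-dimensional problem. Because both densities share covariance $\sigma^2 I$, the log-likelihood ratio $\log(g_{X_1}/g_{X_0})$ is affine in $\vec r$ and depends on $\vec r$ only through the projection onto the unit vector $\vec u=(f(\cali{D})-f(\cali{D'}))/\|f(\cali{D})-f(\cali{D'})\|_2$. Writing $s=\vec u^{\top}\vec r$, the problem collapses to distinguishing the univariate Gaussians $\mathcal{N}(\vec u^{\top} f(\cali{D}),\sigma^2)$ and $\mathcal{N}(\vec u^{\top} f(\cali{D'}),\sigma^2)$, whose means are separated by $d:=\|f(\cali{D})-f(\cali{D'})\|_2$. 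The decision boundary lies at the midpoint, and a standard calculation gives $\operatorname{Err}=\Phi(-d/(2\sigma))$, hence
\begin{equation*}
\madiGau \;\le\; 1-2\Phi\!\left(-\frac{d}{2\sigma}\right) \;=\; 2\Phi\!\left(\frac{d}{2\sigma}\right)-1.
\end{equation*}

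Finally I would apply the two monotone facts that close the argument. By Definition~\ref{def:global_sensitivity}, $d=\|f(\cali{D})-f(\cali{D'})\|_2\le \Delta f$, and $\Phi$ is increasing; so $\madiGau\le 2\Phi(\Delta f/(2\sigma))-1$. Substituting the calibration from Equation~\eqref{eq:sigmaEpsDlt}, $\sigma = \Delta f\sqrt{2\ln(1.25/\dlt)}/\eps$, cancels $\Delta f$ and yields the stated bound $\rho_\alpha = 2\Phi(\eps/(2\sqrt{2\ln(1.25/\dlt)}))-1$. The main obstacle I anticipate is the justification of the one-dimensional reduction: one must argue cleanly that no multidimensional classifier can outperform the optimal likelihood-ratio test, which here is measurable with respect to a single projected coordinate. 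Invoking the Neyman--Pearson lemma (or, equivalently, the fact that the Bayes classifier minimizes misclassification probability) is the cleanest way to discharge this; everything else is bookkeeping with the normal CDF and the sensitivity definition.
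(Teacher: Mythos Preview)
Your proposal is correct and follows essentially the same route as the paper: identify the optimal adversary as the Bayes classifier between two isotropic Gaussians, reduce to a one-dimensional discrimination problem along the line joining the means (the paper phrases this via linear discriminant analysis and the Mahalanobis distance $\Delta=\|\vec\mu_1-\vec\mu_2\|_2/\sigma$, you phrase it via the sufficiency of the projection $\vec u^\top \vec r$), obtain $\madiGau=2\Phi(d/(2\sigma))-1$, and then bound $d\le GS_{f_2}$ and substitute the calibration from Equation~\eqref{eq:sigmaEpsDlt}. The only cosmetic difference is that the paper cites the Gaussian overlap result from the literature rather than invoking Neyman--Pearson explicitly.
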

\iffullversion
\begin{proof} 
We start from Eq.~\eqref{eq:diffOfDistr} where the Gauss-distributions are $g_{X_1}$ and $g_{X_0}$. Since both distributions arise from the same mechanism they have the same $\Sigma$ but different means $\mu_1=f(\cali{D})$ and $\mu_0=f(\cali{D'})$. Since the strongest adversary is the Bayes adversary that chooses according to Eq.~\eqref{eq:adversary} and we assume equal priors, the decision boundary between $\cali{D}$ and $\cali{D}'$ is the point of intersection of the densities (see Figure~\ref{fig:error-rate-pdfs-eps6} for the 1D-case). We use linear discriminant analysis where the boundary is a hyperplane halfway between $\mu_1$ and $\mu_0$. This plane is halfway ($\Delta /2$) between the two centers, where $\Delta$ is the Mahalanobis distance~\cite{Maha1936}	$\Delta = \sqrt{(\vec{\mu_1}-\vec{\mu_2})^T\Sigma^{-1}(\vec{\mu_1}-\vec{\mu_2})}$. Notably the decision boundary between \cali{D} and \cali{D'} does not depend on $\Sigma$, but the possible distance between $\mu_1$ and $\mu_0$ (i.e., sensitivity). As we add independent noise in all dimensions $\Sigma = \sgm^2\mathbb{I}$, we simplify all calculations from Eq.~\eqref{eq:diffOfDistr} to the one-dimensional case and simplify $\Delta=\frac{\|\vec{\mu_1} - \vec{\mu_2}\|_2}{\sgm}$. 
Thus,
    \begin{align} \label{eq:advantageGaussSingle}
		\madiGau &= \Phi(\Delta/2)-\Phi(-\Delta/2) \nonumber = 2\Phi(\Delta/2)-1 \nonumber \\
		& = 2\Phi\left(\frac{\|\vec{\mu_1} - \vec{\mu_2}\|_2}{2\sgm}\right) -1 
	\end{align}
Inserting the standard deviation needed for $\epsdlt$-DP from Eq.~\eqref{eq:sigmaEpsDlt} then yields
   \begin{align*}
		\madiGau &= 
		2\Phi\left(\frac{\|\vec{\mu_1} - \vec{\mu_2}\|_2}{2GS_{f_2}(\sqrt{2\ln
(1.25/\dlt)}/\eps)}\right) -1 \\
		& \le 2\Phi\left(\frac{\eps}{2(\sqrt{2\ln
(1.25/\dlt)})}\right) - 1 = \ra
	\end{align*}
\end{proof}
\else\fi{}

We can calculate $\eps$ from a chosen maximum expected advantage
\begin{equation}\label{eq:adv_to_eps}
\eps = \sqrt{2\ln
(1.25/\dlt)} \, \Phi^{-1}\left(\frac{\ra+1}{2}\right)
\end{equation}
\epsdlt guarantees with $\dlt>0$ can be expressed via a scalar value $\ra$. Summarizing, we now have complementary interpretability scores, where $\rb$ represents a bound on individual deniability and $\ra$ relates to the expected probability of reidentification. 
While $\rb$ holds for all mechanisms, $\ra$ was derived solely for the Gaussian mechanism.
We provide example plots of $\rb$ and $\ra$ for different \epsdlt in Figure~\ref{fig:general-confidence-successrate-curves}. To compute both scores, we use Theorems~\ref{cor:general-upper-bound} and~\ref{def:expected-adversarial-success-rate}. We set $f(\cali{D}) = (0_1,0_2,\ldots,0_k)$ and $f(\cali{D'}) = (1_1,1_2,\ldots,1_k)$ for all dimensions $k$, so $GS_{f_2} = \sqrt{k}$. Figure~\ref{fig:expected-confidence-bound}~illustrates that there is no significant difference for $\rb$ between \eps-DP and \epsdlt-DP. In contrast, $\ra$ strongly depends on the choice of \dlt.

\begin{figure}
	\centering
	\begin{subfigure}{0.25\linewidth}
		\centering
		\includegraphics[width=1\linewidth]{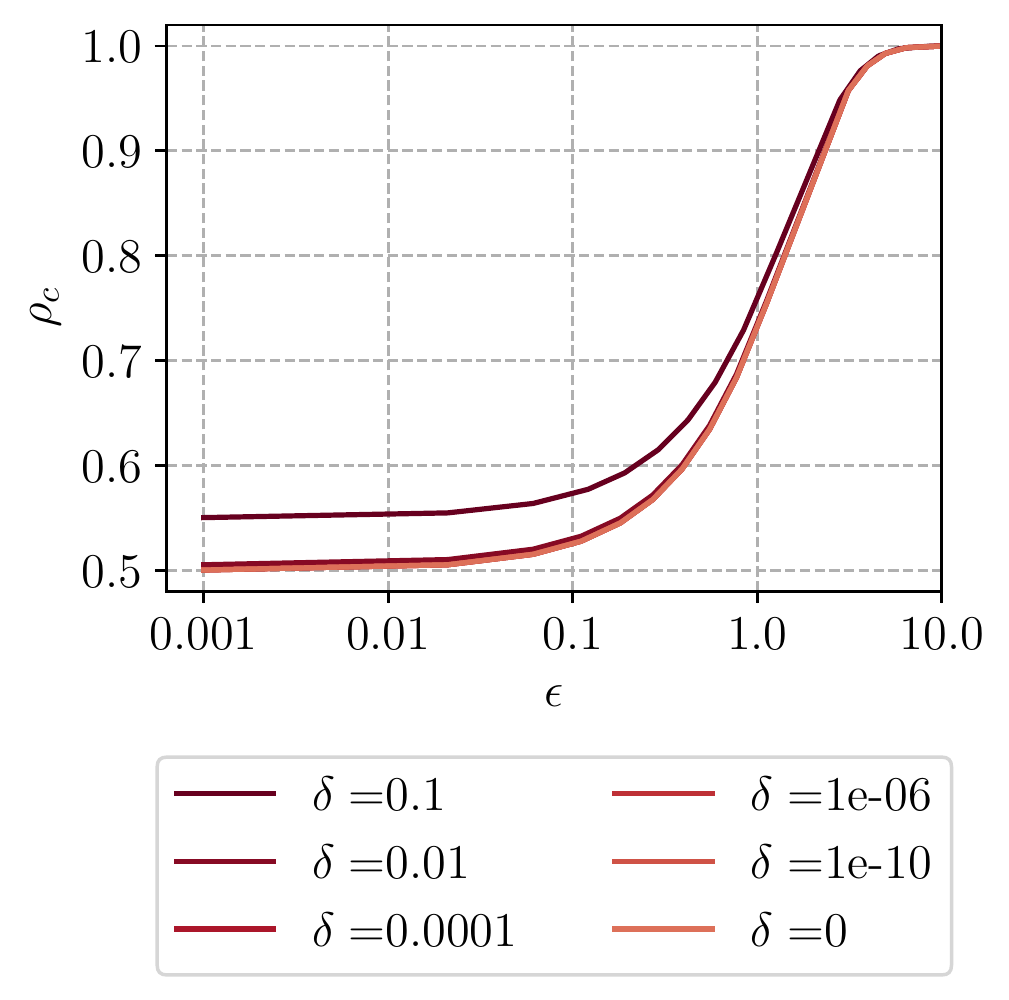}
		\caption{$\rb$}
		\label{fig:expected-confidence-bound}
	\end{subfigure}%
	\begin{subfigure}{0.25\linewidth}
		\centering
		\includegraphics[width=1\linewidth]{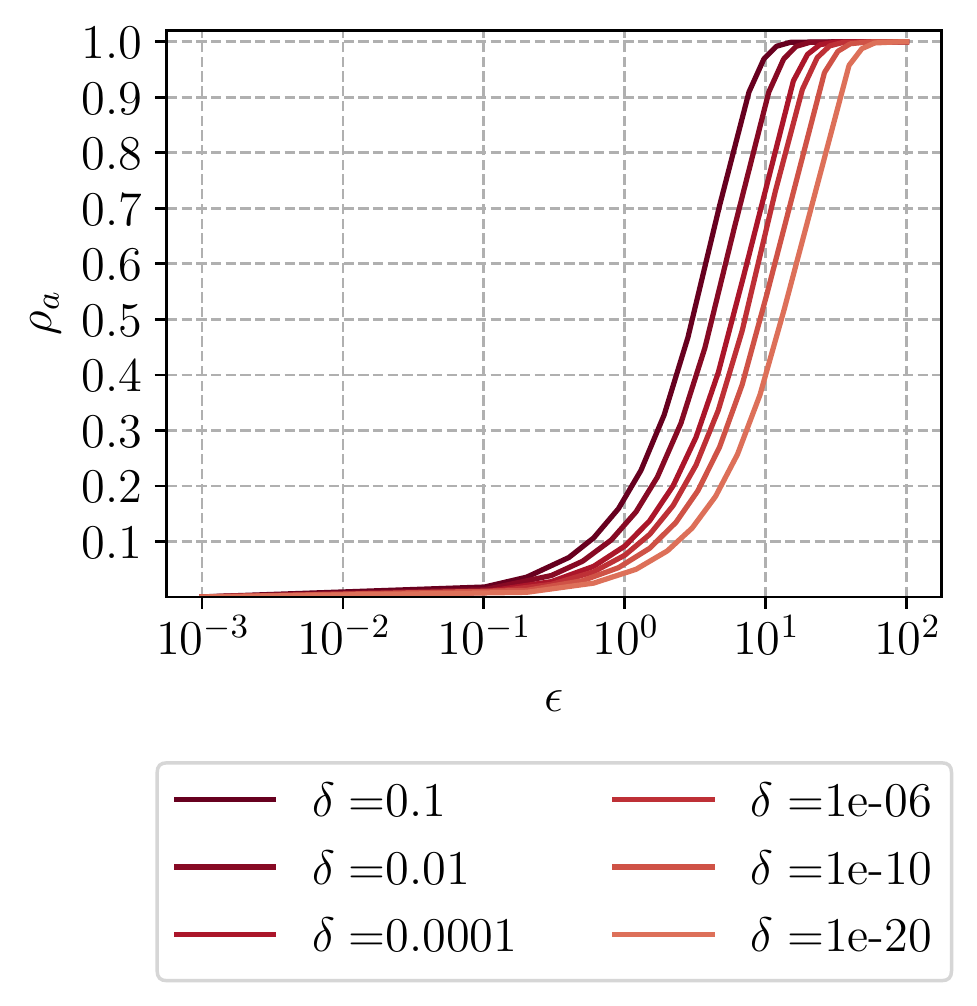}
		\caption{$\ra$}
		\label{fig:expected-success-rate}
	\end{subfigure}%
	\caption{$\rb$ and $\ra$ for various \epsdlt when using $\cali{M}_{Gau}$}
	\label{fig:general-confidence-successrate-curves}
\end{figure}

\subsection{RDP Instead of Sequential Composition}
\label{subsec:comp}

In iterative settings, such as NN training, the data scientist will have to perform multiple mechanism executions, which necessitates the use of composition theorems to split the total guarantee into guarantees per iteration $(\eps_i, \dlt_i)$. Sequential composition only offers loose bounds in practice~\cite{dwork2010, KOV17}; we suggest using RDP composition, which allows a tight analysis of the privacy loss over a series of mechanisms. Therefore, we adapt both $\rb$ and $\ra$ to RDP.

We first demonstrate that RDP composition results in stronger \epsdlt guarantees than sequential composition for a fixed bound $\rb$. \iffullversion We start from Eq.~\eqref{eq:simple_rdp_belief}:
\else We start from the simple RDP belief:\fi{}

\begin{align}
	\beta_k(\cali{D}|R) & \le \frac{1}{1+\prod_{i=1}^{k} e^{-(\eps_{RDP,i}+(\alpha-1)^{-1}\ln
(1/\dlt_i))}} \nonumber \\
	& = \frac{1}{1+e^{k(\alpha-1)^{-1}\ln
(\dlt_i)-\sum_{i=1}^{k}\eps_{RDP,i}}} \label{eq:rdp-delta-out} \\
	& = \frac{1}{1+e^{(\alpha-1)^{-1}\ln
(\dlt_i^k)-\sum_{i=1}^{k}\eps_{RDP,i}}} \nonumber \\
	& = \frac{1}{1+e^{-(\sum_{i=1}^{k}\eps_{RDP,i}-(\alpha-1)^{-1}\ln
(\dlt_i^k))}} = \rb \label{eq:rdp-eps-composition}
\end{align}

We assume the same value of $\dlt_i$ is used during every execution and can therefore remove it from the sum in Eq.~\eqref{eq:rdp-delta-out}. Eq.~\eqref{eq:rdp-eps-composition} and the conversion ($\alpha$, $\eps_{RDP}$)-RDP to $(\eps_{RDP}-\frac{\ln\dlt}{\alpha-1},\dlt)$-DP imply that an RDP-composed bound can be achieved with a composed \dlt equal to $\dlt_i^k$. We know that sequential composition results in a composed \dlt value equal to $k\dlt_i$. Since $\dlt^k<k\dlt$, RDP offers a stronger \epsdlt guarantee for the same $\rb$, and results in a tighter bound for $\rb$ under composition. This behavior can also be interpreted as the fact that holding the composed \epsdlt guarantee constant, the value of $\rb$ is greater when sequential composition is used compared to RDP.

A similar analysis of the expected membership advantage under composition is required when considering a series of mechanisms $\cali{M}$. We restrict our elucidations to the Gaussian mechanism. The $k$-fold composition of $\cali{M}_{Gau_i}$, each step guaranteeing $(\alpha, \eps_{RDP,i})$-RDP, can be represented by a single execution of $\cali{M}_{Gau}$ with $k$-dimensional output guaranteeing $(\alpha,\eps_{RDP} = k\eps_{RDP,i})$-RDP. \iffullversion We start from Eq.~\eqref{eq:advantageGaussSingle}, and use Eq.~\eqref{eq:gaussrdp} and the fact that $GS_{f_2}$ bounds $\|\mu_{1,i}-\mu_{2,i}\|$.
\else We use Eq.~\eqref{eq:gaussrdp} and the fact that $GS_{f_2}$ bounds $\|\mu_{1,i}-\mu_{2,i}\|$. \fi{}

\begin{align*}
	\madiGau 
    & = 2\Phi\left(\frac{\|\vec{\mu_1} - \vec{\mu_2}\|_2}{2\sgm_i}\right) -1  = 2\Phi\left(\frac{\sqrt{k}\|\mu_{1,i} - \mu_{2,i}\|_2}{2GS_{f_2} \sqrt{\alpha/(2\eps_{RDP,i})}}\right) -1\\
    & \leq 2\Phi\left(\frac{\sqrt{k}}{2 \sqrt{\alpha/(2\eps_{RDP,i})}}\right) -1  = 2\Phi\left(\sqrt{\frac{k\eps_{RDP,i}}{2\alpha}}\right) - 1 \\
    & = 2\Phi\left(\sqrt{\frac{\eps_{RDP}}{2\alpha}}\right) -1 = \ra \label{eq:rdp-rho-alpha}
\end{align*}

The result shows that $\AstrongGau$ fully takes advantage of the RDP composition properties of $\eps_{RDP,i}$ and $\alpha$; as expected, $\ra$ takes on the same value, regardless of whether $k$ composition steps with $\eps_{RDP,i}$ or a single composition step with $\eps_{RDP}$ is carried out. Therefore, we can calculate the final $\ra$ for functions with multiple iterations, such as the training of deep learning models, and $\ra$ can be decomposed into a privacy guarantee per composition step with RDP. 

\section{Application to Deep Learning}
\label{sec:tight}

In DPSGD, the stochastic gradient descent optimizer adds Gaussian noise with standard deviation \sgm to the computed gradients. The added noise ensures that the learned NN is \epsdlt differentially private w.r.t.~the training dataset. This section illustrates our method for choosing DPSGD privacy parameters. Data scientists may first choose upper bounds for the posterior belief, from which \eps is obtained using Eq.~\eqref{eq:epsFromBound}. From \eps and the sensitivity, the standard deviation \sgm of the Gaussian noise is determined.

We discuss a heuristic for estimating the local sensitivity in Section~\ref{subsec:settingDPSGD}. Then, Section~\ref{subsec:quantifyIdentifiability} formulates an algorithm for implementing $\AstrongGau$, and discusses how this algorithm is used to empirically quantify the posterior belief and the advantage. Finally, using the implemented adversary $\AstrongGau$ a method for auditing the privacy loss \eps and the bounds derived in Section~\ref{sec:up_bound} is provided in Section~\ref{subsec:auditEps}.

\subsection{Setting Privacy Parameters and Determining the Sensitivity}
\label{subsec:settingDPSGD}
Based on the recommendation to set \clip to the median of the norms of unclipped gradients~\cite{abadi2016} we set $\clip=3$ in all our experiments. In the following, we describe how to set up the system in order to determine the standard deviation of Gaussian noise \sgm. We want to limit $\AstrongGau$'s belief of distinguishing a training dataset differing in any chosen person by setting the upper bound for the posterior belief $\rb$. We then transform $\rb$ to an overall \eps for the $k$ update steps in DPSGD using Eq.~\eqref{eq:epsFromBound}, 
which in turn leads to \sgm for the DPSGD using Eq.~\eqref{eq:sigmaEpsDlt}.
In Eq.~\eqref{eq:sigmaEpsDlt} two parameters need to be set: $\sens$ and $\dlt$. While we set $\dlt$ to $1/|\cali{D}|$ for all experiments, the choice of $\sens$ is more challenging. The upper bound for the privacy loss \eps can only be reached when $\sens$ is set specifically to the sensitivity of the dataset at hand. We can calculate the local sensitivity for bounded DP as 
\begin{equation} \label{eq:sensitivityADI}
    LS_{\clgrad_i}(\cali{D}) = n\cdot||\clgrad_i(\cali{D'})-\clgrad_i(\cali{D})\| \nonumber,
\end{equation}

and for unbounded DP as
\begin{equation} 
    LS_{\clgrad_i}(\cali{D}) = ||(n-1)\cdot\clgrad_i(\cali{D'})-n\cdot\clgrad_i(\cali{D})\| \nonumber,
\end{equation}

where $\clgrad_i(\cali{D})$ and $\clgrad_i(\cali{D'})$ represent the average of all clipped, unperturbed per-example gradients $\bar{g}_i(x) \forall x\in\cali{D}$ and $x\in\cali{D'}$, respectively.

Since clipping is done before perturbation, the global sensitivity $GS_f$ in DPSGD is set to the clipping norm for unbounded DP, i.e., $GS_{f}=\clip$. The sensitivity bounds the impact of a data point on the total gradient, equivalent to the difference between the gradients differing between $\cali{D}$ and $\cali{D'}$, which is artificially bounded by $\clip$ for unbounded DP. For bounded DP where one record is instead replaced with another in $\cali{D'}$, the lengths of the clipped gradients of these two records could each be \clip and point in opposite directions resulting in $n\cdot||\clgrad_i(\cali{D'})-\clgrad_i(\cali{D})\|_2\leq2\clip$.

Although \clip bounds the influence of a single training record on the gradient, \clip may well be loose, since \clip does not necessarily reflect the factual difference between the training dataset and possible neighboring datasets. When \clip is loose, the DP bound on privacy loss \eps is not reached, and the identifiability metrics $\ra$ and $\rb$ will not be reached either. Nissim et al.~\cite{Nissim2007} proposed local sensitivity $LS_f$ to specifically scale noise to the input data. The use of $LS_f$ decreases the noise scale by narrowing the DP guarantee from protection against inference on any possible adjacent datasets to inference on the original dataset and any adjacent dataset. In ML projects training and test data are often sampled from a static holdout, where all data points stem from a domain of similar data. If the holdout is a very large dataset, only the specific neighboring datasets possible in this domain need to be protected under DP. To reach the DP bound, we suggest fixation of the training dataset \cali{D} and considering only neighboring datasets \cali{D'} adjacent to \cali{D}.

However, approximating $LS_{\clgrad_i}$ for NN training is difficult because the gradient function output depends not only on \cali{D} and \cali{D'}, but also on the architecture and current weights of the network. To ease this dilemma, we propose \textit{dataset sensitivity} in Definition~\ref{def:dataset_sensitivity}. Dataset sensitivity is a heuristic with which we strive to consider the neighboring dataset $\hat{\cali{D}'}$ with the largest difference to \cali{D} within the overall ML dataset \cali{U} in an effort to approximate $LS_{\clgrad_i}$. We assume that similar data points will result in similar gradients. While this assumption does not necessarily hold under crafted adversarial examples~\cite{GSS15}, for which privacy protection cannot be guaranteed, the malicious intent renders the necessity for their protection debatable.
In Definition~\ref{def:dataset_sensitivity} the dissimilarity measure of specific datasets is not further specified. 

\begin{definition}[Dataset Sensitivity]
\label{def:dataset_sensitivity}

Consider a given dataset \cali{U}, a training dataset \cali{D} $\subseteq$ \cali{U}, all neighboring datasets \cali{D'} $\subseteq$ \cali{U} and a dissimilarity measure $d$. The dataset sensitivity $DS(\cali{D)}$ w.r.t.~dissimilarity measure $d$ is then defined as 
\begin{equation*}
DS(\cali{D}) = \max\limits_{\cali{D}'}d(\cali{D},\cali{D}')
\end{equation*}
and consequently
\[
\hat{\cali{D}'}:=\argmax\limits_{\cali{D}'}d(\cali{D},\cali{D}')
\]
\end{definition}

In practice, if a dissimilarity or distance measure $d$ of individual data points is available, it can be used to find the most dissimilar neighboring dataset $\hat{\cali{D'}}$ that maximizes the dataset sensitivity. The computation of $\cali{D}'$ depends on the neighboring datasets and is different for unbounded and bounded DP. More precisely, for unbounded DP one forms $\hat{\cali{D}'}=\cali{D} \setminus \{\datapoint'\}$ by removing the most dissimilar data point $\hat{\datapoint}$ from the training data
\begin{equation} \label{eq:xunbounded}
\hat{\datapoint}=\argmax\limits_{\datapoint_1\in \cali{D}}\sum_{\datapoint_2\in\cali{D}\setminus \datapoint_1} d(\datapoint_1,\datapoint_2)
\end{equation}
The dataset $\hat{\cali{D}'}$ is then used to approximate the local sensitivity $LS_{\clgrad_i}$ by
\begin{equation}\label{eq:datapointLSunbounded}
LS_{\clgrad_i}(\cali{D})\approx \hat{LS}_{\clgrad_i}(\cali{D}):= \|\bar{g}_i(\hat{\datapoint})\|,
\end{equation}
where $\bar{g}_i(\datapoint)$ is the clipped gradient of data point $\datapoint$ in step $i$.
The simplification from $LS_{\clgrad_i}$ to $DS$ allows us to bypass the complex gradient calculations to identify dissimilar \cali{D} and \cali{D'}. The computational complexity of computing the dataset sensitivity only depends on the dataset size $n$, but not the number of iterations $k$, like the local sensitivity does. For bounded DP where a neighboring dataset is formed by replacing an element $\{\datapoint\}\in \cali{D}$ with an element $\datapoint' \in \cali{U} \setminus \cali{D}$ one searches for 
\begin{equation} \label{eq:xbounded}
(\hat{\datapoint},\hat{\datapoint'})=\argmax\limits_{\datapoint \in \cali{D}, \datapoint' \in \cali{U} \setminus \cali{D}}d(\datapoint,\datapoint').
\end{equation}
and approximates the local sensitivity as
\begin{equation} 
\label{eq:datapointLS}
LS_{\clgrad_i}(\cali{D})\approx \hat{LS}_{\clgrad_i}(\cali{D}):= \|\bar{g}_i(\hat{\datapoint})-\bar{g}_i(\hat{\datapoint'})\|
\end{equation}

\subsection{Empirical Quantification of Posterior Beliefs and Advantages}
\label{subsec:quantifyIdentifiability}
In Section~\ref{subsec:settingDPSGD} the noise scale \sgm limits the upper bound for the posterior belief of $\Astrong$ on the original dataset $\cali{D}$. According to Theorem~\ref{cor:general-upper-bound} this upper bound holds with probability $1-\dlt$. For a given dataset, the posterior belief might be much smaller than the bound, so it is desirable to determine the empirical posterior belief on $\cali{D}$. The same holds for the advantage $\madi$ and the upper bound $\ra$ from Theorem~\ref{def:expected-adversarial-success-rate} w.r.t.~identifying dataset $\cali{D}$. 
We formulate an implementation of the adversary $\AstrongGau$ which allows us to assess the empirical posterior belief $\beta$ and membership advantage $\madi$, and thus the empirical privacy loss of specific trained models.

The adversary $\AstrongGau$ strives to identify the training dataset, having the choice between neighboring datasets \cali{D} and \cali{D'}.
In addition to \cali{D} and \cali{D'}, $\AstrongGau$ is assumed to have knowledge of the NN learning parameters and updates after every training step $i\leq k$: learning rate $\eta$, weights $\theta_i$, perturbed gradients $\dpgrad_i$, privacy mechanism $\cali{M}_i$, parameters $\epsdlt$, $\var{C}$, the resulting standard deviation $\sigma$ of the Gaussian distribution and the prior beliefs. The implementation of $\Astrong$ for DPSGD is provided in Algorithm~\ref{alg:adversary_algo}. 

In each learning step $\Astrong$ first computes the unperturbed, clipped batch gradients for both datasets based on the resulting weights from the previous step of the perturbed learning algorithm (Steps~\ref{step:dgrad} and~\ref{step:d_prime_grad}). Then $\AstrongGau$ calculates the sensitivity. The $\eps_i$ and $\delta_i$ for each iteration is calculated using RDP composition (cf.~Eq.~\eqref{eq:gaussrdp}). Consequently, the Gaussian mechanism scale \sgm is calculated from \epsdlt and \sens using Eq.~\eqref{eq:sigmaEpsDlt}. 
Using the standard deviation \sgm, the posterior belief $\beta_i$ is updated in Step~\ref{algstep:belief} based on the observed perturbed clipped gradient $\dpgrad_i$ and the unperturbed gradients from Steps~\ref{step:dgrad} and ~\ref{step:d_prime_grad}. The calculation is based on Lemma~\ref{lem:post_bel}. After the training finished, $\AstrongGau$ tries to identify the used dataset based on the final posterior beliefs $\beta_k$ on the two datasets. $\AstrongGau$ wins the identification game, if $\AstrongGau$ chooses the used dataset \cali{D}. The advantage to win the experiment is statistically estimated from several identical repetitions of the experiment. $\madiGau$ and \dlt are empirically calculated by counting the cases in which $\beta_k$ for \cali{D} exceeds $0.5$ and $\rb$, respectively. 

One pass over all records in \cali{D} (i.e., one epoch), can comprise multiple update steps. 
In mini-batch gradient descent, a number of $b$ records from \cali{D} is sampled for calculating an update and one epoch results in $|\cali{D}|/b$ update steps. 
In batch gradient descent, all records in \cali{D} are used within one update step, and one epoch consists of a single update step. We operate with batch gradient descent, since it reflects the auxiliary side knowledge of $\Astrong$; thus $k$ denotes the overall number of epochs and training steps. In some of the following experiments we will set $\sens=LS_{\clgrad_i}(\cali{D})$ in Step~\ref{step:gs} by calculating the local sensitivity $LS_{\clgrad_i}$ for the clipped gradients $\clgrad_i$ (cf.~Definition~\ref{def:local_sensitivity}). 
These assumptions are similar to those of white-box MI attacks against federated learning~\cite{Nasr2018}.

The time complexities for calculating dataset sensitivity, posterior belief and advantage are stated in Table~\ref{tab:exp-complexity}. Note that the calculation effort will either lie with $\Astrong$ or the data scientist, depending on whether an audit or an actual attack is performed. The calculation of dataset sensitivity was well parallelizable for the dissimilarity measures considered in this paper.

\begin{table}[t]
      \centering
      \caption{Time complexity for $DS$, $\beta$ and $Adv$}\label{tab:exp-complexity}%
      \label{tab:complexity}
\begin{adjustbox}{width=0.5\linewidth}
\begin{tabular}{|c|c|c|}
\hline
Algorithm & \makecell{Time \\ complexity} & Comment \\ \hline
$DS$ & $O(n^2)$ & \makecell{One-time effort for \\ training dataset.} \\\hline
$\beta$ & $O(nk)$ & \makecell{Computing belief from \\ clipped Batch gradients. } \\\hline
$Adv$ & $O(1)$ & \makecell{Computing $Adv$ for individual \\ training (cf.~\ref{step:output} in Algorithm~\ref{alg:adversary_algo})} \\ \hline
\end{tabular}
\end{adjustbox}
\end{table}
 \begin{algorithm} 
	\caption{$\AstrongGau$ in Deep Learning for Unbounded DP} 
	\label{alg:adversary_algo}
	\begin{algorithmic}[1]
		\Require Neighboring datasets \cali{D},\cali{D'} with $n$,$n'$ records, respectively, $k$, $\theta_0$, $\eta$, $\dpgrad_i$ per training step $i \leq k$, $\cali{M}_i$, $(\eps_i, \dlt_i)$, prior beliefs $\beta_0(\cali{D}) = \beta_0(\cali{D'}) = 0.5$,
		\For{$i\in [k]$} \\
			\textbf{Calculate clipped Batch gradients} \\
			$\clgrad_i(\cali{D})\leftarrow \cali{M}_i(\cali{D}, \sgm=0) $\label{step:dgrad}\\ 
			$\clgrad_i(\cali{D'})\leftarrow \cali{M}_i(\cali{D'}, \sgm=0)$\label{step:d_prime_grad}\\ 
			\textbf{Calculate Sensitivity and $\sgm$} \\
			$\sens \leftarrow GS_{\clgrad}=\clip$ \label{step:gs}\\ 
			$\sgm_i = \sens\sqrt{2\ln(1.25/\dlt_i)}/\eps_i$ \\
			\textbf{Calculate Belief} \\
			$\beta_{i+1}(\cali{D})\leftarrow \frac{\beta_i(\cali{D}) \cdot Pr[\cali{M}_i(\cali{D}, \sgm= \sgm_i)=\dpgrad_i]}{\beta_i(\cali{D})*Pr[\cali{M}_i(\cali{D}, \sgm= \sgm_i)=\dpgrad_i] + \beta_i(\cali{D'}) \cdot Pr[\cali{M}_i(\cali{D'})=\dpgrad_i]}$ \label{algstep:belief} \\
			$\beta_{i+1}(\cali{D}')\leftarrow 1-\beta_{i+1}(\cali{D})$ \label{step:belief_alt}\\
			\textbf{Compute weights} \\
			$\theta_{i+1}\leftarrow \theta_i - \eta \dpgrad_i$ \label{step:update}
		\EndFor \\
		Output \cali{D} if $\beta_k(\cali{D})>\beta_k(\cali{D'})$, \cali{D'} otherwise \label{step:output}
	\end{algorithmic}
\end{algorithm}

\subsection{Method for Auditing \eps}
\label{subsec:auditEps}
In this section we introduce a method to empirically determine the privacy loss \eps. This empirical loss is denoted \feps and is relevant for data scientists. If \feps is close to \eps, the DP perturbation does not add more noise than necessary. However, if \feps is far below \eps, too much noise is added, and utility is unnecessarily lost. We repeat the training process multiple times and use the set of results to calculate $\feps$. The empirical loss \feps can be calculated from different quantities $LS_{\clgrad}$, $\beta_k$, and $\madiGau$ observed during model training: 
\begin{itemize}
    \item From $LS_{\clgrad_1},\ldots,LS_{\clgrad_k}$, the empirical \feps is calculated as follows:
\begin{enumerate*}[label=(\roman*)]
    \item calculate $\sgm_1,\ldots,\sgm_k$ as $\sgm_i = 2\clip / LS_{\clgrad_i} \cdot \sgm$ (cf.~Eq.~\eqref{eq:eps_gauss}) for each repetition of the experiment,
    \item calculate \feps with RDP composition with target \dlt, epochs $k$, and $\vec{\sgm}$ using Tensorflow privacy accountant\footnote{\url{https://github.com/tensorflow/privacy/blob/master/tensorflow_privacy/privacy/analysis/rdp_accountant.py}}, and \item choose the maximum value $\eps'^{\max}$ over all repetitions of the experiment.
\end{enumerate*}

\item From posterior beliefs $\beta$, \feps is calculated by
\begin{enumerate*}[label=(\roman*)]
\item choosing the maximum final posterior belief $\beta_k^{\max}$ for all experiments
    and \item setting $\feps=\beta_k^{\max}/(1-\beta_k^{\max})$ using Eq.~\eqref{eq:epsFromBound}.
\end{enumerate*}

\item From $\madiGau$:
\begin{enumerate*}[label=(\roman*)]
\item counting the number of wins $n_{win}$, i.e., how often $\beta_k>0.5$ over all $n_{Exp}$ experiments,
\item estimate $\madiGau=2 n_{win}/n_{Exp}-1$, and
\item calculate \\ $\feps = \sqrt{2\ln(1.25/\dlt)} \, \Phi^{-1}\left(\frac{\madiGau+1}{2}\right)$ using Eq.~\eqref{eq:adv_to_eps}.
\end{enumerate*}
\end{itemize}
This empirical loss \feps will only be close to \eps if noise is added according to the sensitivity of the dataset. Of the three variants above, the calculation from the sensitivities is the most direct method. The calculation from the posterior belief is less direct. Since the identification advantage ignores the size of the belief it is expected to be the least accurate way to estimate \eps.

Furthermore, we also implement the MI adversary $\Ami$ defined by Yeom et al.~\cite{YGF+18} and compare the resulting advantage to the advantage achieved by $\AstrongGau$. This instance of $\Ami$ uses the loss of a neural network prediction in an approach similar to $\AstrongGau$, who analyzes the gradient updates instead.
\section{Evaluation}
\label{sec:eval}

We empirically show that we can train models that yield an empirical privacy loss \feps close to the specified privacy loss bound $\eps$. We achieve an advantage equal to $\ra$ and tightly bound posterior belief $\rb$ when the sensitivity is set to $LS_{\clgrad_i}$ for the clipped batch gradients at every update step $i$. Privacy is specified by setting the upper bound for the belief, e.g., to~$\rb=0.9$. Together with the sensitivity (cf.~Section~\ref{subsec:settingDPSGD}) this determines the noise of the Gaussian mechanism and yields \eps. The posterior belief $\beta$ and the advantage $\madiGau$ are then empirically determined using the implemented adversary\footnote{We provide code and data for this paper: \url{https://github.com/SAP-samples/security-research-identifiability-in-dpdl}. All experiments within our work were realized by using the Tensorflow privacy package: \url{https://github.com/tensorflow/privacy}.} $\AstrongGau$ as described in Section~\ref{subsec:quantifyIdentifiability}. The empirical privacy loss $\eps'$ is determined as described in Section~\ref{subsec:auditEps}.
We evaluate $\AstrongGau$ for three ML datasets: the MNIST image dataset\footnote{Dataset and detailed description available at: \url{http://yann.lecun.com/exdb/mnist/}}, the Purchase-100 customer preference dataset~\cite{shokri2017}, and the Adult census income dataset~\cite{Koh96}. To improve training speed in our experiments, we set training dataset \cali{D} to a randomly sampled subset of size 100 for MNIST and 1000 for both Purchase-100 and Adult. Multiple trainings and perturbations are evaluated on the sampled \cali{D}. 

The MNIST NN consists of two convolutional layers with kernel size $(3,3)$ each, batch normalization and max pooling with pool size $(2,2)$, and a 10-neuron softmax output layer. For Purchase-100, the NN comprises a 600-neuron input layer, a 128-neuron hidden layer and a 100-neuron output layer. Our NN for Adult consists of a 104-neuron input layer due to the use of dummy variables for categorical attributes, two 6-neuron hidden layers and a 2-neuron output layer. We used relu and softmax activation functions for the hidden layers and the output layer. For all experiments we chose the learning rate $\eta=0.005$ and set the number of iterations $k=30$ which led to converging models. Preprocessing comprised removal of incomplete records, and data normalization.

\subsection{Evaluation of Sensitivities}
\label{sec:exp_sensitivity}

While local sensitivity is favored when striving to reach the privacy bound, we evaluate and compute both $\sens=\hat{LS}_{\clgrad_i}(\cali{D})$ and $\sens=GS_{\clgrad}$, as described in Section~\ref{subsec:settingDPSGD}. In addition, we consider bounded and unbounded DP in our experiments. In order to find the most dissimilar data point for the construction of $\hat{\cali{D}}'$ in Eq.~\eqref{eq:xunbounded} and Eq.~\eqref{eq:xbounded} we require a dissimilarity measure. We considered domain specific candidates for the dissimilarity measures: the negative structural similarity index measure (SSIM) and Euclidean distance for MNIST, and the Hamming, Euclidean, Manhattan, and Cosine distance for the datasets Purchase-100 and Adult. We chose these metrics because we expect them to contain information relevant to the gradients of data points. However, for example we quickly noticed for the Euclidean distance on MNIST image data that it does not capture the meaning or shapes pictured and thus falls short. Instead, the SSIM captures structure in images, and images with a small SSIM dissimilarity values resulted in similar gradients, while images with greater dissimilarity resulted in very different gradients. This observation supports the hypothesis that an appropriate domain-specific measure can be used to estimate local sensitivity $LS_{\clgrad_i}$ from dataset sensitivity $DS$. For Purchases-100 the Hamming distance was clearly superior to the Cosine distance as illustrated in Figures~\ref{fig:sensitivity-boxplot-Purch} and~\ref{fig:sensitivity-boxplot-purch-cosine}. The Manhattan distance worked best for the Adult dataset. For the sensitivity experiments the bound for the posterior belief is set to $\rb=0.9$. Each experiment concerning dataset sensitivity is repeated $n_{Exp}=250$ times.

To confirm that maximizing dataset sensitivity from Definition~\ref{def:dataset_sensitivity} allows us to approximate $LS_{\clgrad_i}$, we train with several differing \cali{D'} and evaluate the sensitivities for all $k=30$ iterations. For the MNIST dataset, the top three choices of \cali{D'} that maximize $DS$ and the three choices that minimize $DS$ are used. As expected, the resulting local sensitivities $LS_{\clgrad_i}$ shown in Figure~\ref{fig:sensitivity-boxplot-MNIST} are clearly larger for the three top choices. The outliers for the second and third smallest dataset sensitivities only account for 1.6\% and 5.2\% of the 7500 overall observed sensitivity norms. More importantly, no far outliers occur for the largest and smallest sensitivities. The same general trend holds for Purchase-100 and Adult in Figures~\ref{fig:sensitivity-boxplot-Purch} and ~\ref{fig:sensitivity-boxplot-Adult}, which we limit to the maximum and minimum $DS$ due to space constraints.

If the chosen global sensitivity is too large compared to the local sensitivity of a specific dataset too much noise will be added when using $GS_{\clgrad}$, as described in Section~\ref{subsec:settingDPSGD}. Global sensitivity $GS_{\clgrad}$ and local sensitivity $LS_{\clgrad_i}$ are determined for bounded and unbounded DP over $n_{Exp}=1000$ repetitions for $\rb=0.9$ ($\eps=2.2$) according to Eq.~\eqref{eq:datapointLSunbounded} and Eq.~\eqref{eq:datapointLS}.
They can be compared in Figure~\ref{fig:local-sensitivity}. 
\begin{figure}
	\centering
    \begin{subfigure}[b]{0.26\linewidth}
        \includegraphics[width=\linewidth]{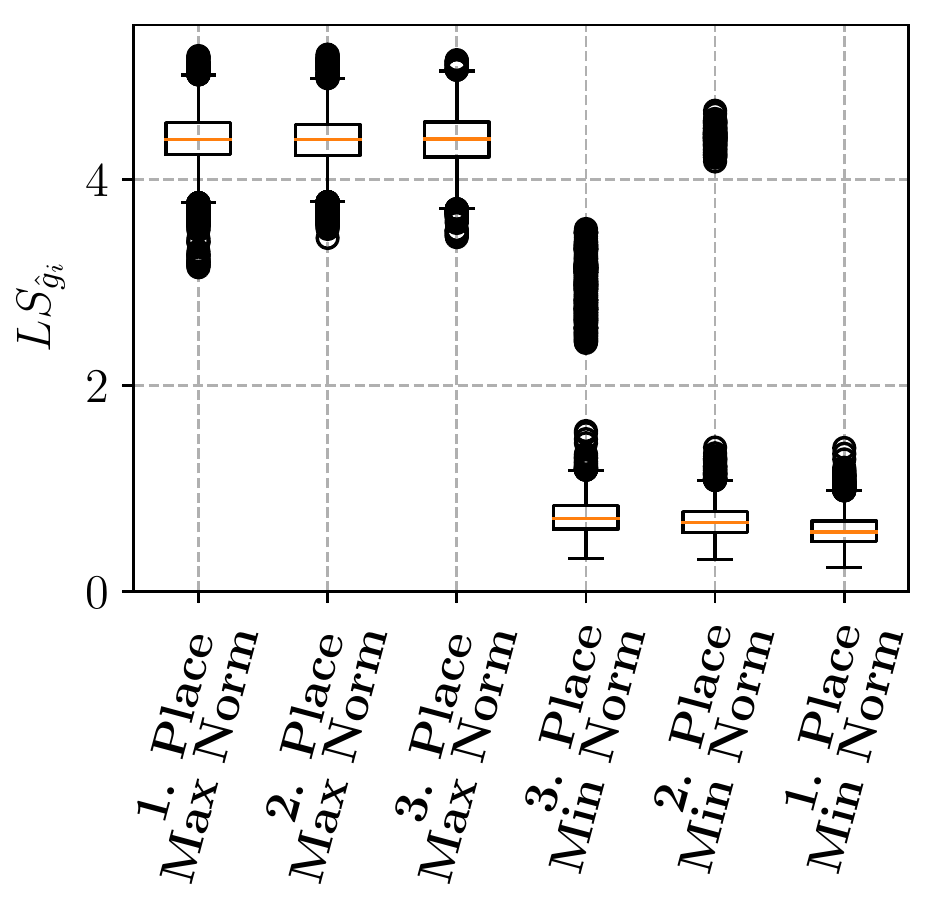} 
        \caption{MNIST: \\ SSIM distance}
        \label{fig:sensitivity-boxplot-MNIST}
    \end{subfigure}
    \begin{subfigure}[b]{0.15\linewidth}
        \includegraphics[width=\linewidth]{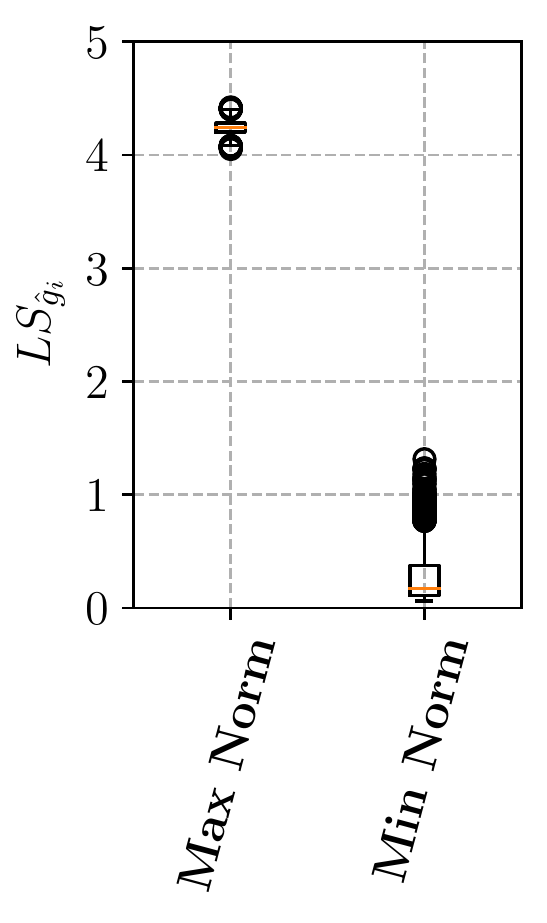}
        \caption{Purchase-100: Hamming distance}
        \label{fig:sensitivity-boxplot-Purch}
    \end{subfigure}
    \begin{subfigure}[b]{0.15\linewidth}  
        \includegraphics[width=\linewidth]{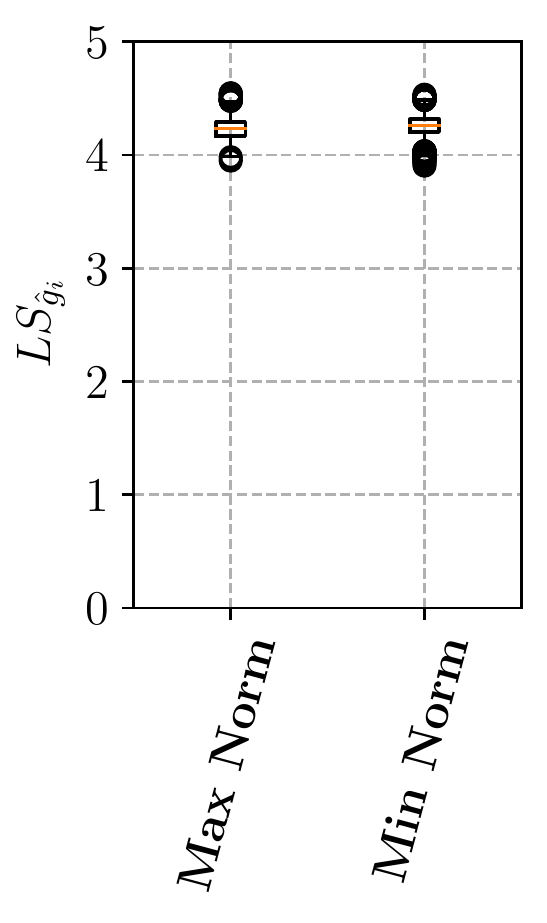}
        \caption{Purchase-100: Cosine distance}
        \label{fig:sensitivity-boxplot-purch-cosine}
    \end{subfigure}
    \begin{subfigure}[b]{0.15\linewidth}
        \includegraphics[width=\linewidth]{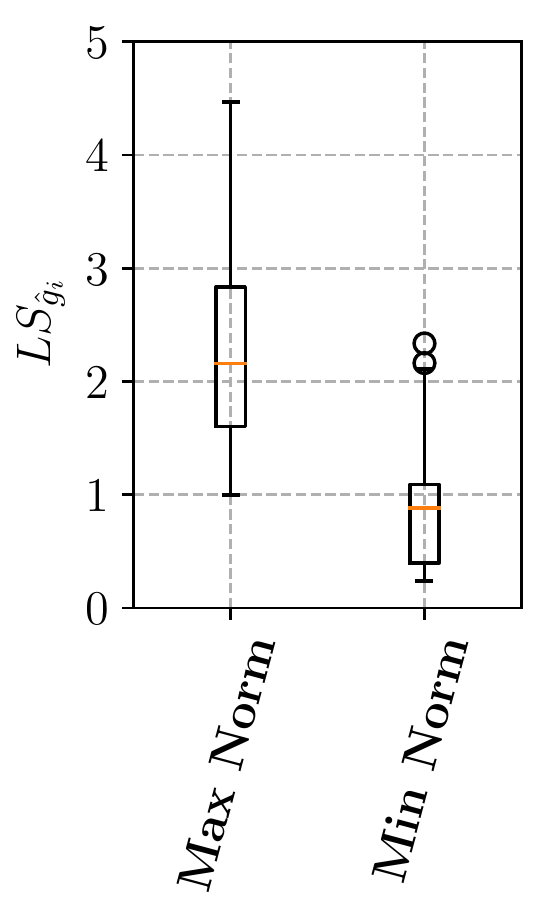}
        \caption{Adult: \\ Manhattan distance}
        \label{fig:sensitivity-boxplot-Adult}
    \end{subfigure}
    \caption{Distribution of the local sensitivity $LS_{\clgrad_i}(\cali{D})$ computed by the adversary using Eq.~\eqref{eq:sensitivityADI} from max to min difference in \cali{D} and \cali{D'} for all $30$ epochs $i$, repeated $250$ times}
    \label{fig:sensitivity-boxplot} 
\end{figure} 

\begin{figure}
	\centering
	\includegraphics[width=0.5\linewidth]{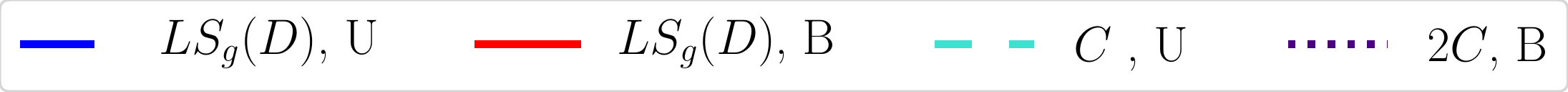}
	\begin{subfigure}[b]{0.225\linewidth}    
	    	\includegraphics[width=\linewidth]{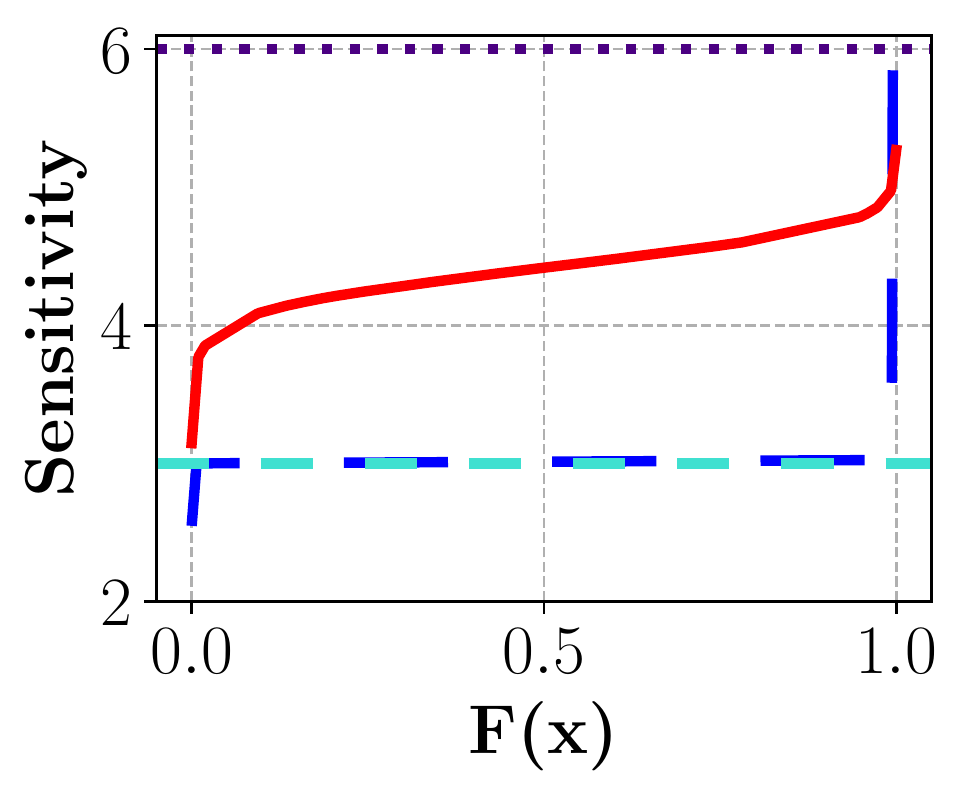}
	    	\subcaption{MNIST}
            \label{fig:local-sensitivity-mnist}
	\end{subfigure}%
    \begin{subfigure}[b]{0.225\linewidth}  
            \includegraphics[width=\linewidth]{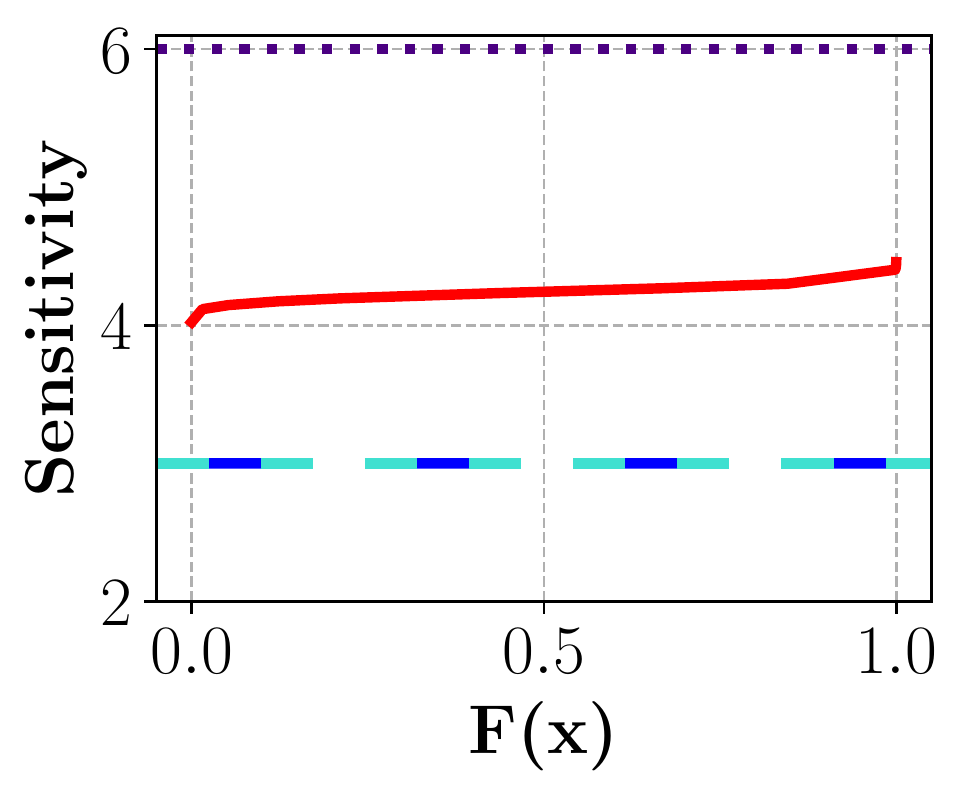}
            \subcaption{Purchase-100}
            \label{fig:local-sensitivity-purch}
    \end{subfigure}%
    \begin{subfigure}[b]{0.225\linewidth}  
            \includegraphics[width=\linewidth]{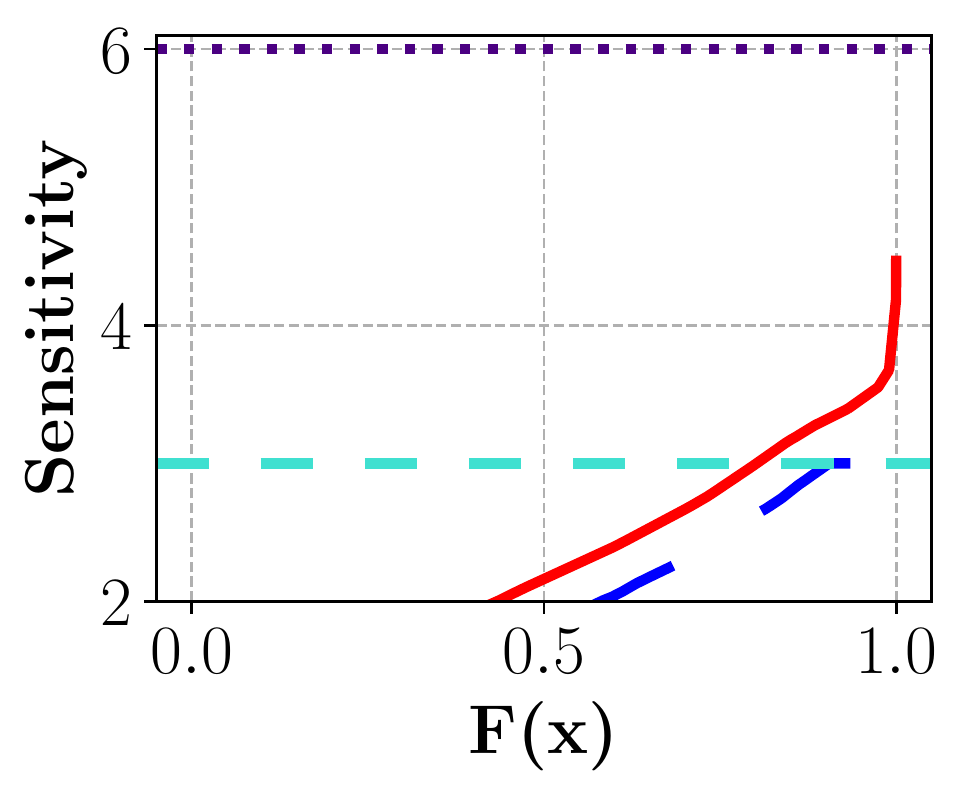}
            \subcaption{Adult}
            \label{fig:local-sensitivity-adult}
    \end{subfigure}  
    \caption{Sensitivities over the course of the training for $\rb=0.9$ ($\eps=2.2$) and $\clip=3$}	
    \label{fig:local-sensitivity}
\end{figure}

\subsection{Quantification of Identifiability for DPSGD}
\label{sec:exp_dpsgd}
For each of the 1000 experiment repetitions, the posterior belief $\beta_k$ and the membership advantage $\madiGau$ are experimentally determined using the implementation of $\AstrongGau$ for DPSGD. We set $\rb=0.9$ ($\eps=2.2$) and compare bounded and unbounded DP. Table~\ref{tab:emp-success} shows the analytically obtained values for privacy loss \eps, and the bound $\ra$ for the advantage. The parameters \eps, \dlt, and $\ra$ for $\rb=0.9$ can be read from Table~\ref{tab:exp-parameters}; \eps is determined from Eq.~\eqref{eq:epsFromBound}, whereas $\ra$ is calculated from \eps from Theorem~\ref{def:expected-adversarial-success-rate}.

First, we verify that the upper bound $\rb$ on the posterior belief holds. The posterior beliefs $\beta_k$ of these experiments are described in Figures~\ref{fig:belief-boxplot-mnist}, \ref{fig:belief-boxplot-purch} and \ref{fig:belief-boxplot-Adult}. For a single experiment the posterior belief on the training dataset $\cali{D}$ is on average only slightly above 0.5. While for most cases the posterior belief is far below the bound of 0.9 (specified by the blue, dashed line), the upper bound is violated with a small probability. The relative frequency of these violations is denoted as $\dlt'$. Since the DP bound, and thus $\rb$, only holds with probability $1-\dlt$ according to Theorem~\ref{cor:general-upper-bound} violations are acceptable as long as $\dlt' \le \dlt$. Indeed, the experimentally obtained $\dlt'$ for $\rb=0.9$ in Table~\ref{tab:emp-success} is always smaller than the corresponding $\dlt$ in Table~\ref{tab:exp-parameters}. Similarly, the advantage should be close to the estimate $\ra$ stated in Table~\ref{tab:exp-parameters}. The advantage is experimentally estimated as the relative frequency of experiments where the implemented adversary $\AstrongGau$ correctly chooses $\cali{D}$ and is stated in Table~\ref{tab:emp-success}.

\begin{figure}
	\centering
	\begin{subfigure}[b]{0.25\linewidth}    
	    	\includegraphics[width=\linewidth]{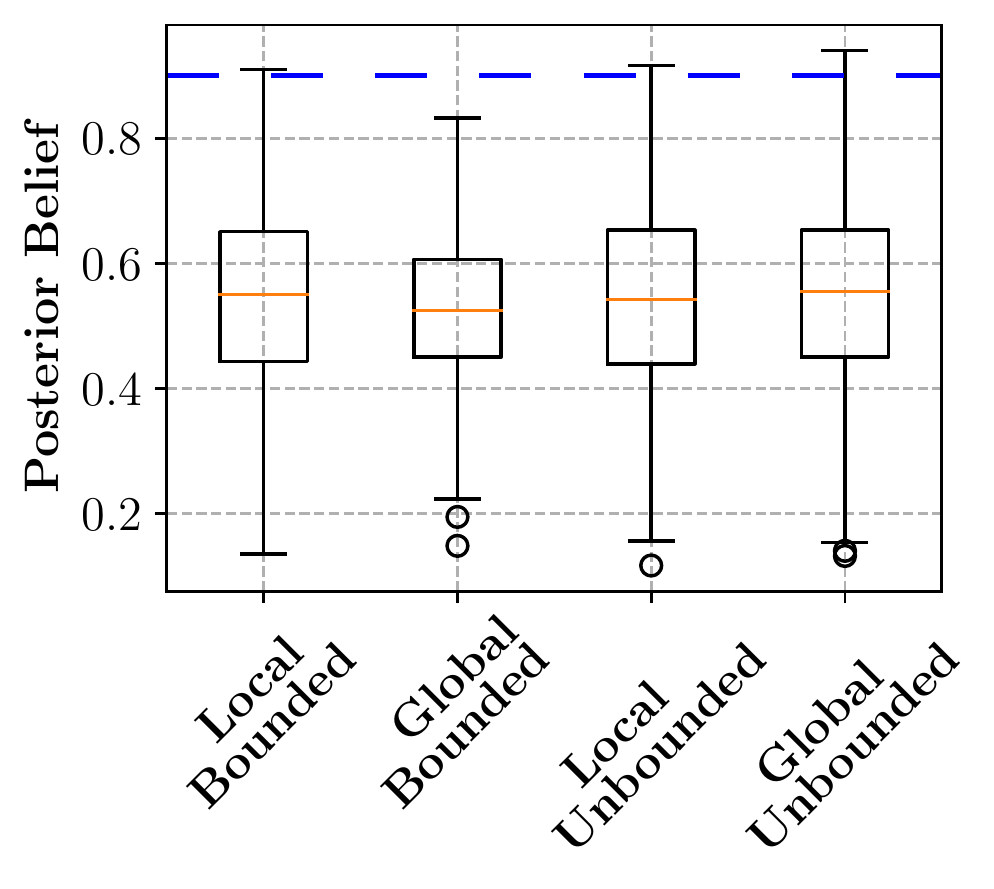}
	        \subcaption{MNIST}
	        \label{fig:belief-boxplot-mnist}
	\end{subfigure}%
    \begin{subfigure}[b]{0.25\linewidth}  
            \includegraphics[width=\linewidth]{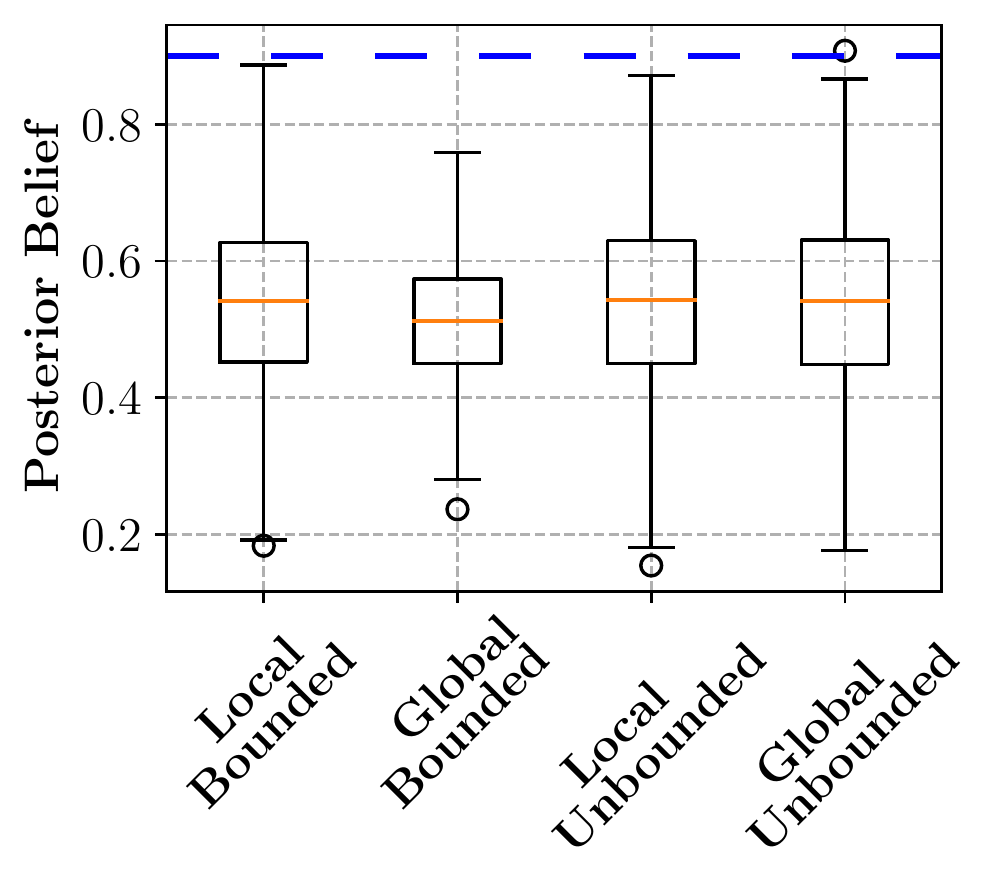}
            \subcaption{Purchase-100}
            \label{fig:belief-boxplot-purch}
    \end{subfigure}%
    \begin{subfigure}[b]{0.25\linewidth}  
            \includegraphics[width=\linewidth]{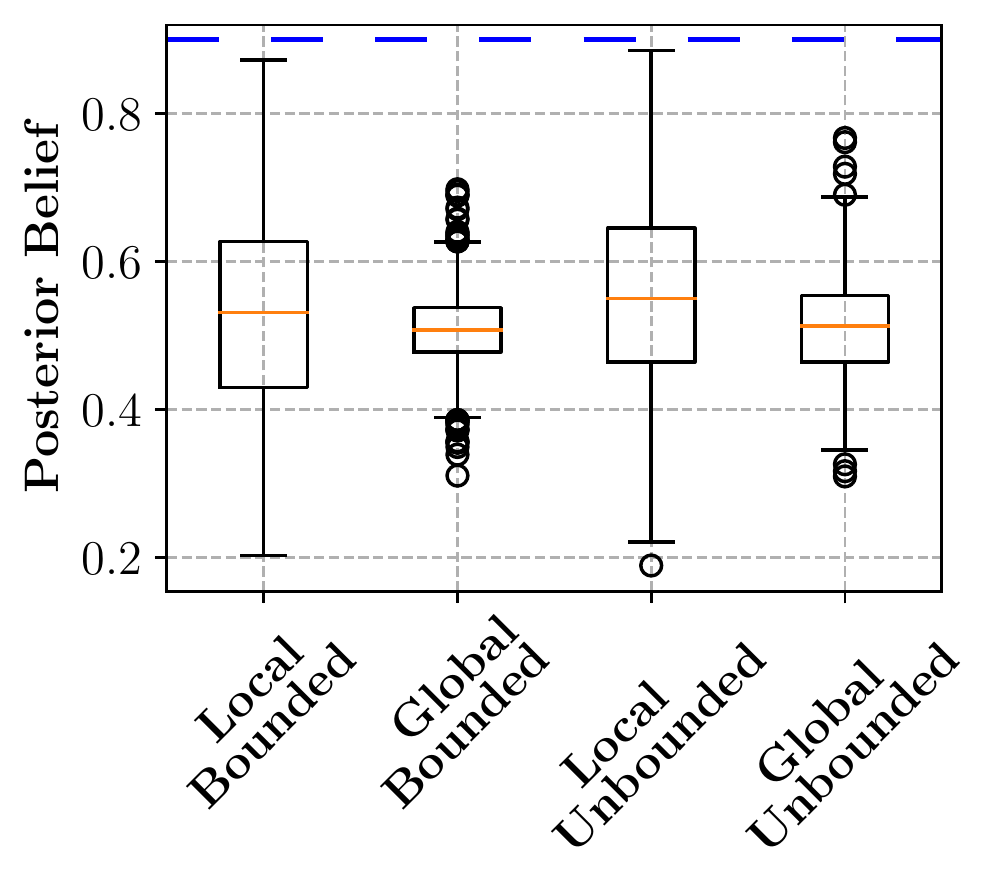}
            \subcaption{Adult}
            \label{fig:belief-boxplot-Adult}
    \end{subfigure}%
    \begin{subfigure}[b]{0.25\linewidth}
        \includegraphics[width=\linewidth]{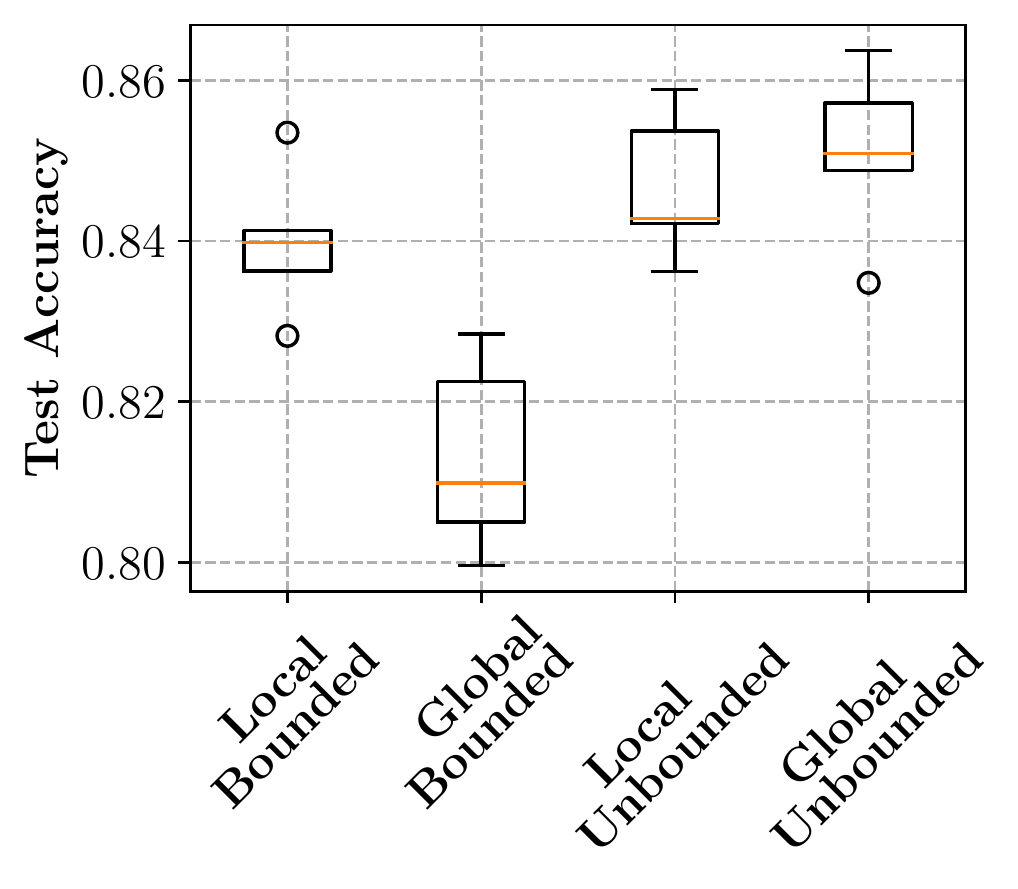}
        \caption{MNIST test accuracy}	
        \label{fig:accuracy-boxplot}
    \end{subfigure}%
    \caption{Distribution of empirical posterior beliefs $\beta_k$ (panels a to c) and an example for test accuracy after training with $\rb=0.9$ ($\eps=2.2$), for local and global sensitivity, bounded and unbounded DP (panel d)}	
    \label{fig:belief-boxplot}
\end{figure}

\begin{table*}[t]
  \begin{minipage}[t]{\linewidth}
      \centering
      \caption{Overview of experiments }%
      
      \subfloat[\label{tab:emp-success} Empirical $\madiGau$ and $\dlt'$ for $\rb=0.9$ using \\ $LS_{\clgrad_i}$ and $GS_{\clgrad}$ with bounded (B) and unbounded (U) DP]{
\resizebox{0.45\columnwidth}{!}{
\begin{tabular}{|cc|c|c|c|c|c|c|}
\hline
                                          &   & \multicolumn{2}{c|}{MNIST}       & \multicolumn{2}{c|}{Purchase-100} & \multicolumn{2}{c|}{Adult}       \\ \cline{3-8} 
                                          &   & $\madiGau$ & $\dlt'$ & $\madiGau$  & $\dlt'$ & $\madiGau$ & $\dlt'$ \\ \hline
\multicolumn{1}{|c|}{\multirow{2}{*}{LS}} & B & 0.24       & 0.002               & 0.25        & 0                   & 0.17       & 0                   \\ \cline{2-8} 
\multicolumn{1}{|c|}{}                    & U & 0.23       & 0.002               & 0.23        & 0                   & 0.22       & 0                   \\ \hline
\multicolumn{1}{|c|}{\multirow{2}{*}{GS}} & B & 0.18       & 0                   & 0.1         & 0                   & 0.13       & 0                   \\ \cline{2-8} 
\multicolumn{1}{|c|}{}                    & U & 0.27       & 0.004               & 0.24        & 0.001               & 0.18       & 0              \\ \hline
\end{tabular}
}}
\subfloat[\label{tab:exp-parameters}Experiment setting for posterior belief $\rb$ and $\dlt$ with analytically determined privacy loss $\eps$ and the advantage bound $\ra$]{
\resizebox{0.55\columnwidth}{!}{
\begin{tabular}{|c|c|c|c|c|c|c|c|c|c|c|c|c|}
\hline
 & \multicolumn{4}{c|}{MNIST} & \multicolumn{4}{c|}{Purchase-100} & \multicolumn{4}{c|}{Adult} \\ \hline
$\rb$ & 0.52 & 0.75 & 0.9 & 0.99 & 0.53 & 0.75 & 0.9 & 0.99 & 0.53 & 0.75 & 0.9 & 0.99 \\ \hline
$\dlt$ & \multicolumn{4}{c|}{0.01} & \multicolumn{4}{c|}{0.001} & \multicolumn{4}{c|}{0.001} \\ \hline
$\eps$ & 0.08 & 1.1 & 2.2 & 4.6 & 0.12 & 1.1 & 2.2 & 4.6 & 0.12 & 1.1 & 2.2 & 4.6 \\ \hline
$\ra$ & 0.01 & 0.14 & 0.28 & 0.54 & 0.01 & 0.12 & 0.23 & 0.46 & 0.01 & 0.12 & 0.23 & 0.46 \\ \hline
\end{tabular}	}
    }

  \end{minipage}\enspace\enspace%
\end{table*}

Figure~\ref{fig:belief-boxplot} illustrates the influence of sensitivity in the bounded and unbounded DP settings. In Figures~\ref{fig:belief-boxplot-mnist}, \ref{fig:belief-boxplot-purch} and \ref{fig:belief-boxplot-Adult}, the chosen upper bound $\rb=0.9$ (blue line) is clearly not reached for the bounded case when global sensitivities are used. Similarly, the advantage of $\AstrongGau$ in Table~\ref{tab:emp-success} is smaller when the global sensitivity is used. Here it holds that $LS_{\clgrad_i}(\cali{D}) < 2\clip=\sens$, which implies that the examples differing between \cali{D'} and \cali{D} do not point in opposite directions in the bounded setting. For the unbounded DP case, this effect is not observed with the MNIST and Purchase-100 datasets. Instead, the use of local and global sensitivity leads to the same distribution of posterior beliefs and approximately the same advantage. This result stems from the fact that the per-example gradients over the course of all epochs were close to or greater than $\clip=3$, i.e., the differentiating example in \cali{D} must have the gradient magnitude $\clip=3$. However, in the Adult dataset, $LS_{\clgrad_i}(\cali{D}) < \clip=3$, so too much noise is added using $GS_{\clgrad}$ in the unbounded DP setting as well. 

From a practical standpoint, these observations are critical, since unnecessary noise degrades the utility of the model when the global sensitivity is too large, as shown in Figure~\ref{fig:accuracy-boxplot}. While all experiments were done with $\clip=3$, we expect a similar relationship between $LS_{\clgrad_i}$ and $GS_{\clgrad}$ for different values of $\clip$, since we observed the unclipped gradients to usually be greater than $\clip=3$.

\subsection{Auditing DPSGD}
\label{sec:exp_audit}

This section details the audit of $\eps$. As shown in Section~\ref{subsec:auditEps}, the calculation of the empirical loss $\feps$ can be based on (i) the local sensitivity, (ii) the posterior beliefs $\beta_k$ or (iii) on the advantage $\madiGau$. To validate that the empirical loss $\feps$ is close to the target privacy loss $\eps$ we use the setting described in Section~\ref{sec:exp_dpsgd} and Table~\ref{tab:exp-parameters}. 
 
The resulting empirical loss $\feps$ is compared to the the target privacy loss $\eps$ for the bounded case in Figures~\ref{fig:mnist_audit} to~\ref{fig:adult_audit}. As expected Figures~\ref{fig:mnist_sens}, \ref{fig:purch_sens} and \ref{fig:adult_sens} support that the privacy loss $\eps$ can be best estimated from the local sensitivity: the red curve lies on the ideal green curve. The estimation is less precise from the posterior beliefs and shown in Figures~\ref{fig:mnist_beta}, \ref{fig:purch_beta} and \ref{fig:adult_beta}. The estimation is worst from the advantage in Figures~\ref{fig:mnist_madi}, \ref{fig:purch_madi} and \ref{fig:adult_madi}, where the red curve deviates most from the ideal green curve for all datasets. It is evident that the use of global sensitivity (blue lines) results in an underestimation of $\eps$ for all datasets. When local sensitivity is used, the small deviation from the ideal curve confirms that $\AstrongGau$ comes close to the theoretical privacy guarantees offered by DP. A data scientist who specifies \eps via the identifiability bounds $\ra$ and $\rb$ can audit \eps using the implementation of $\AstrongGau$. We see that in some cases $\feps > \eps$, or equivalently $\beta_k{(\cali{D})}>\rb$. These variations are due to the probabilistic nature of the estimation and the bound only holds with probability 1-\dlt. Furthermore, we observe in some occasions that $\madiGau > \ra$ which stems from the fact that $\madiGau$ is an expected value for a series of experiments, which falls within a confidence interval around $\ra$.

\begin{figure*}[ht!]
    \centering\includegraphics[width=0.5\textwidth]{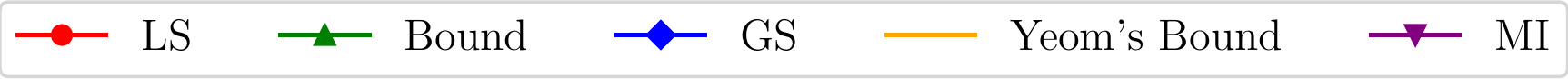} \\
    \begin{subfigure}[b]{0.225\linewidth}  
            \includegraphics[width=\linewidth]{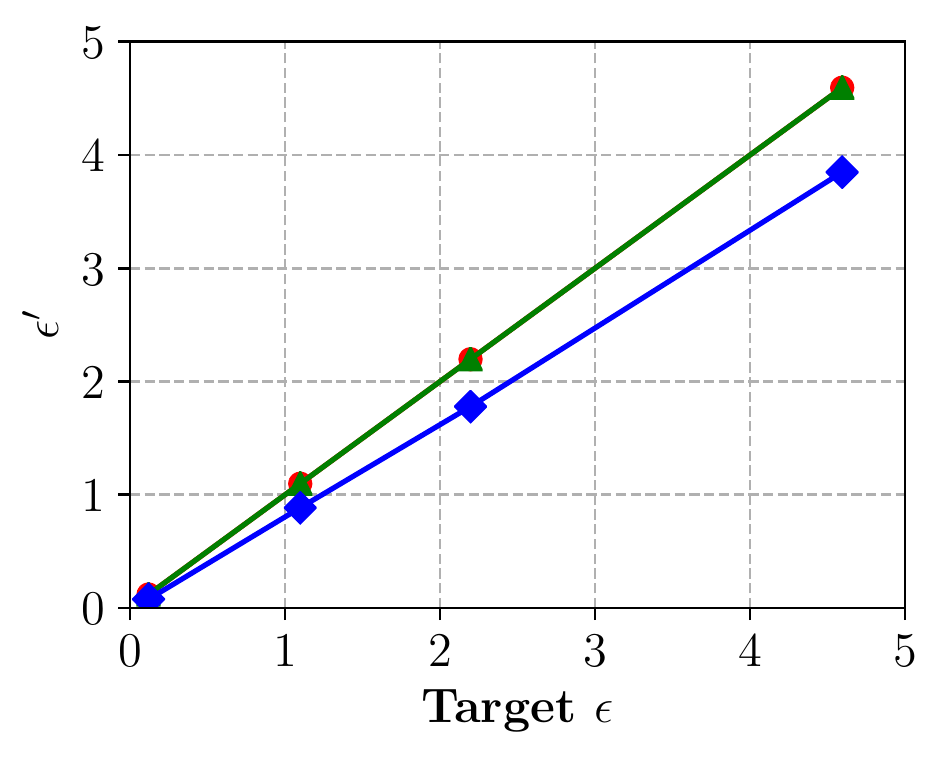}
            \subcaption{\feps from $\sens_{0},\ldots,\sens_{k}$}
            \label{fig:mnist_sens}
    \end{subfigure}
	\begin{subfigure}[b]{0.26\linewidth}    
	    	\includegraphics[width=\linewidth]{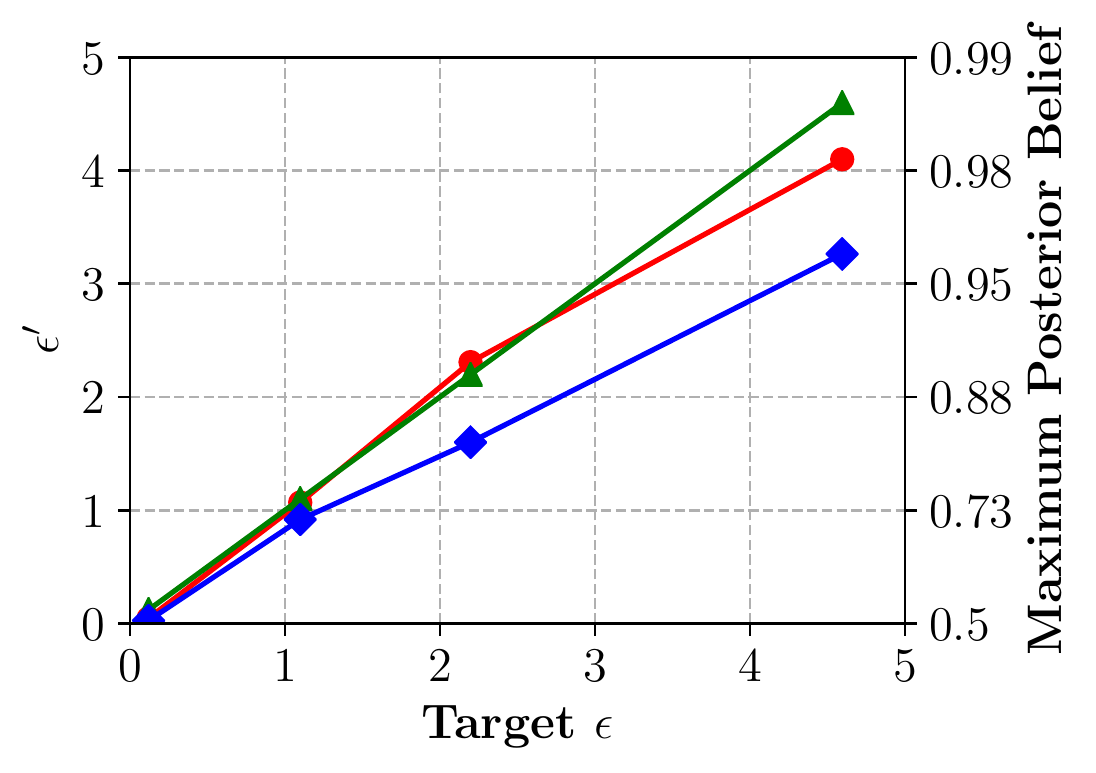}
	        \subcaption{\feps from posterior belief $\beta_k$}
            \label{fig:mnist_beta}
	\end{subfigure}%
    \begin{subfigure}[b]{0.26\linewidth}  
            \includegraphics[width=\linewidth]{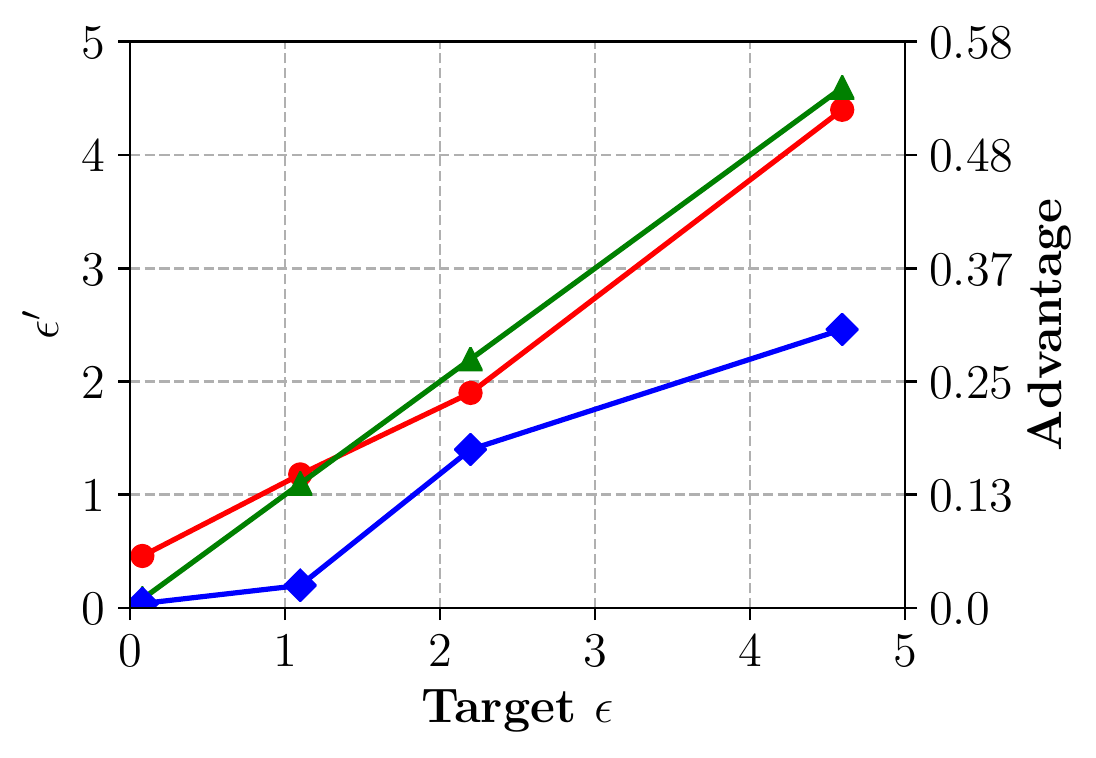}
            \subcaption{\feps from advantage $\madiGau$}
            \label{fig:mnist_madi}
    \end{subfigure}
    \begin{subfigure}[b]{0.23\linewidth}  
            \includegraphics[width=\linewidth]{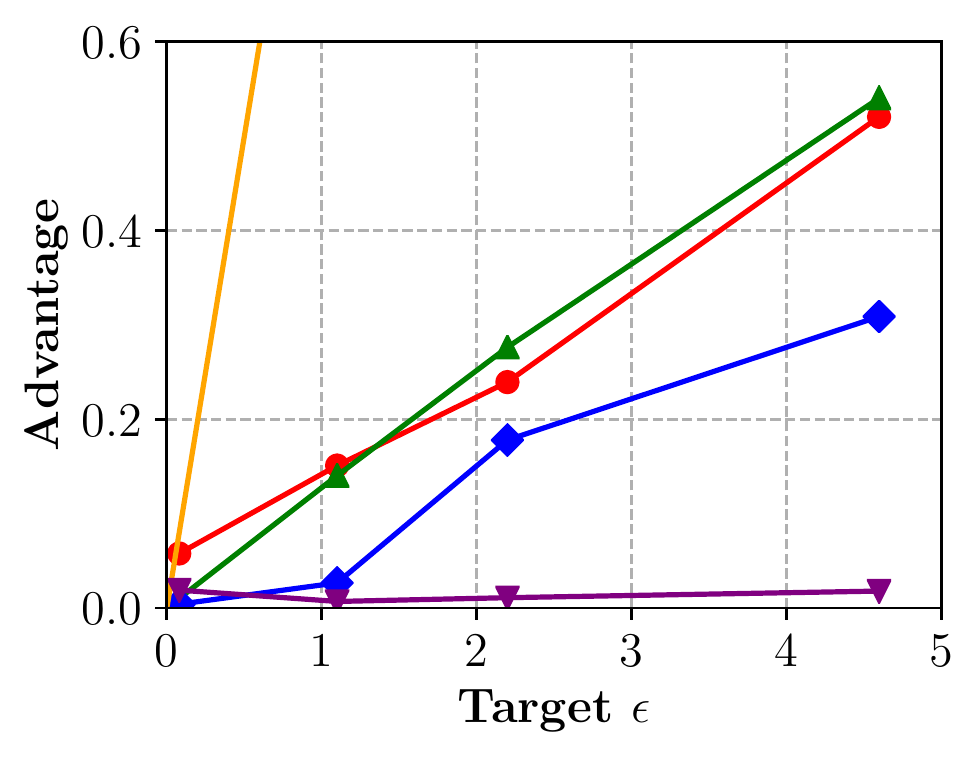}
            \subcaption{Comparison to MI}
            \label{fig:mnist_mi}
    \end{subfigure}
    \caption{Audit of $\eps$ (a-c) and comparison with $\Ami$ (d) for MNIST data (bounded case)}
    \label{fig:mnist_audit}
\end{figure*}%
\begin{figure*}[ht!]
    \centering
    \begin{subfigure}[b]{0.225\linewidth}  
            \includegraphics[width=\linewidth]{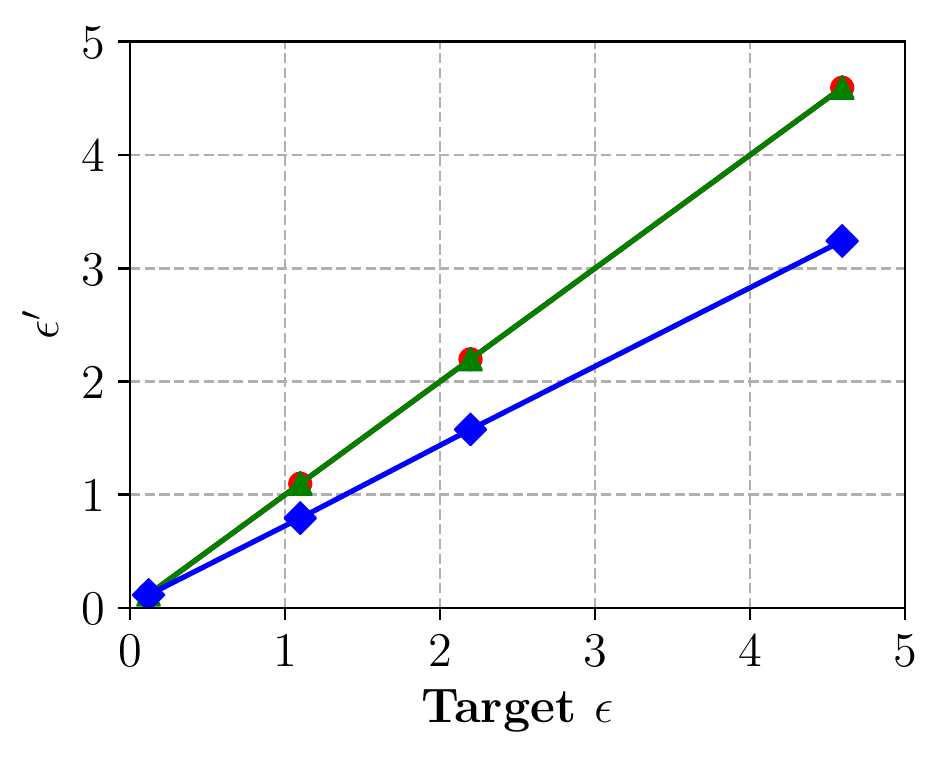}
            \subcaption{\feps from $\sens_{0},\ldots,\sens_{k}$}
            \label{fig:purch_sens}
    \end{subfigure}
	\begin{subfigure}[b]{0.26\linewidth}    
	    	\includegraphics[width=\linewidth]{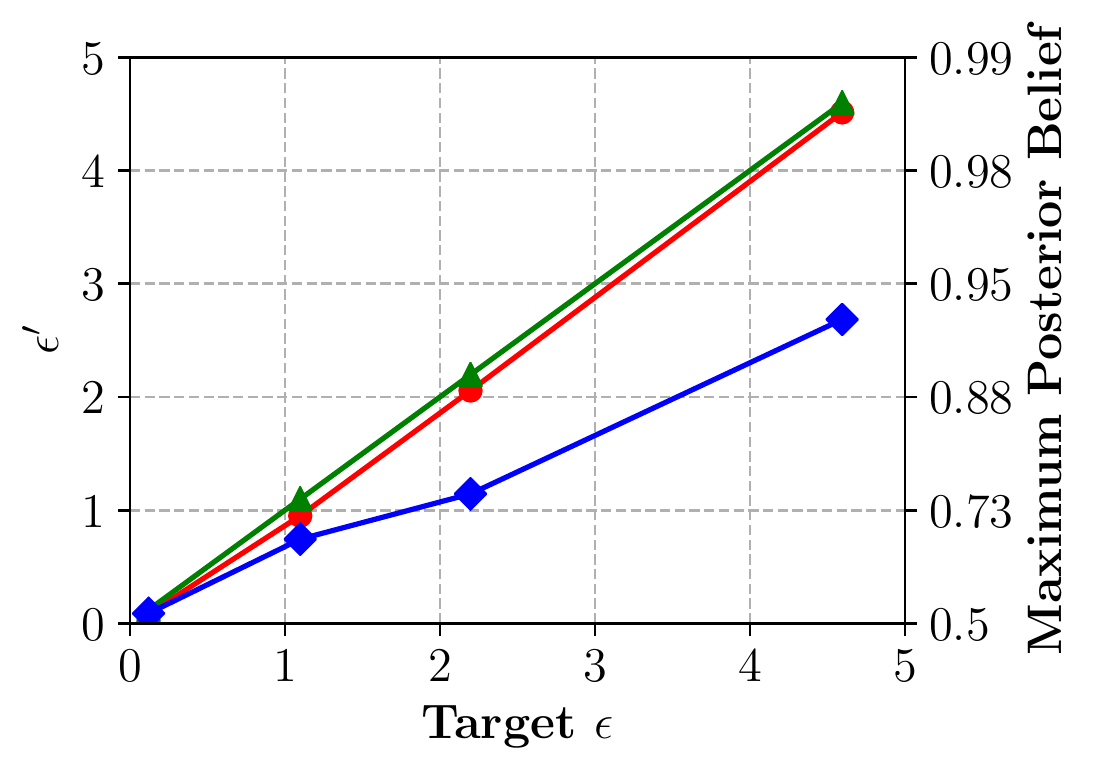}
	        \subcaption{\feps from posterior belief $\beta_k$}
            \label{fig:purch_beta}
	\end{subfigure}%
    \begin{subfigure}[b]{0.26\linewidth}  
            \includegraphics[width=\linewidth]{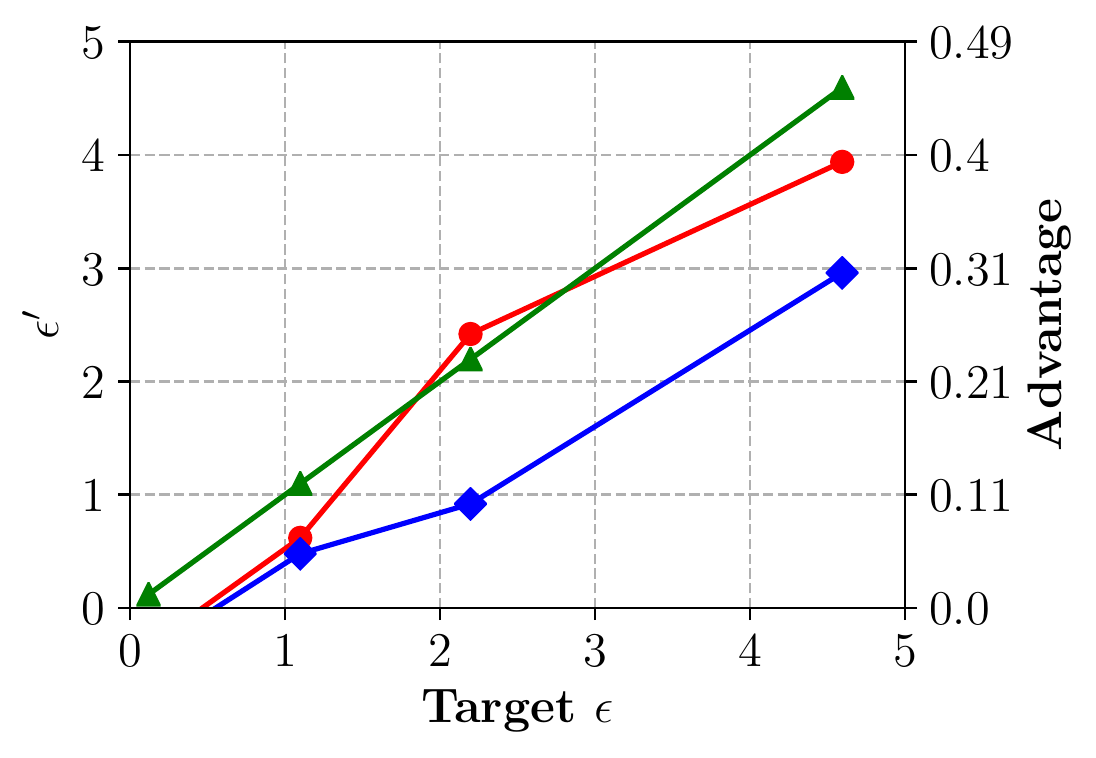}
            \subcaption{\feps from advantage $\madiGau$}
            \label{fig:purch_madi}
    \end{subfigure}
    \begin{subfigure}[b]{0.225\linewidth}  
            \includegraphics[width=\linewidth]{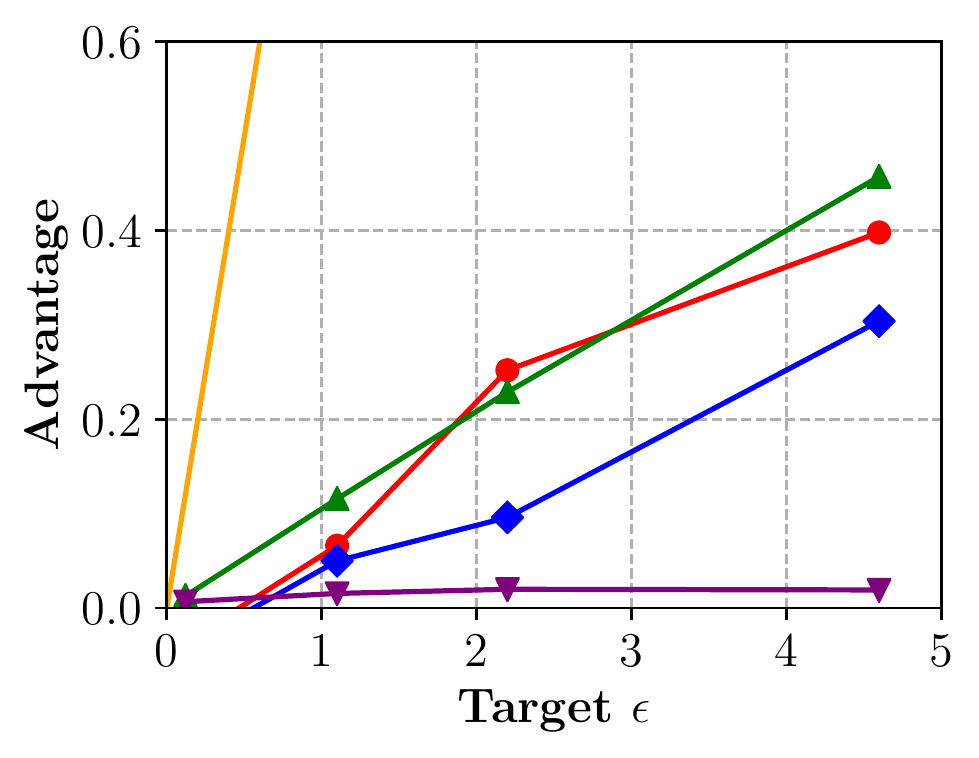}
            \subcaption{Comparison to MI}
            \label{fig:purchase_mi}
    \end{subfigure}
    \caption{Audit of $\eps$ (a-c) and comparison with $\Ami$ (d) for Purchase-100 data (bounded case)}
    \label{fig:purchase_audit}
\end{figure*}%
\begin{figure*}[h!]
    \centering
    \begin{subfigure}[b]{0.225\linewidth}  
            \includegraphics[width=\linewidth]{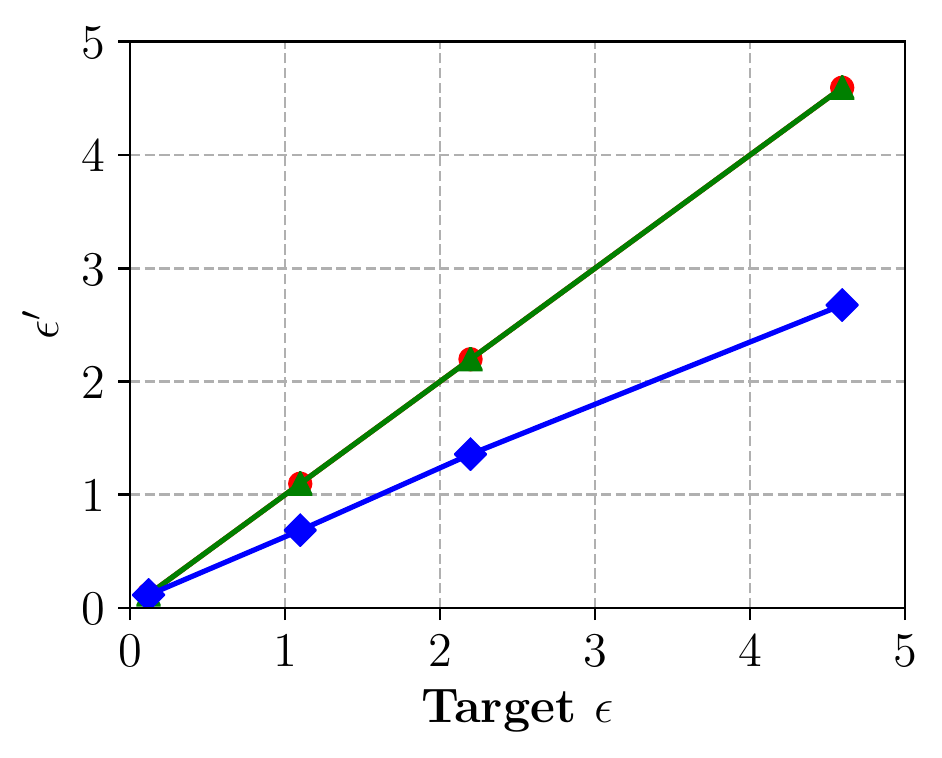}
            \subcaption{\feps from $\sens_{0},\ldots,\sens_{k}$}
            \label{fig:adult_sens}
    \end{subfigure}
	\begin{subfigure}[b]{0.26\linewidth}    
	    	\includegraphics[width=\linewidth]{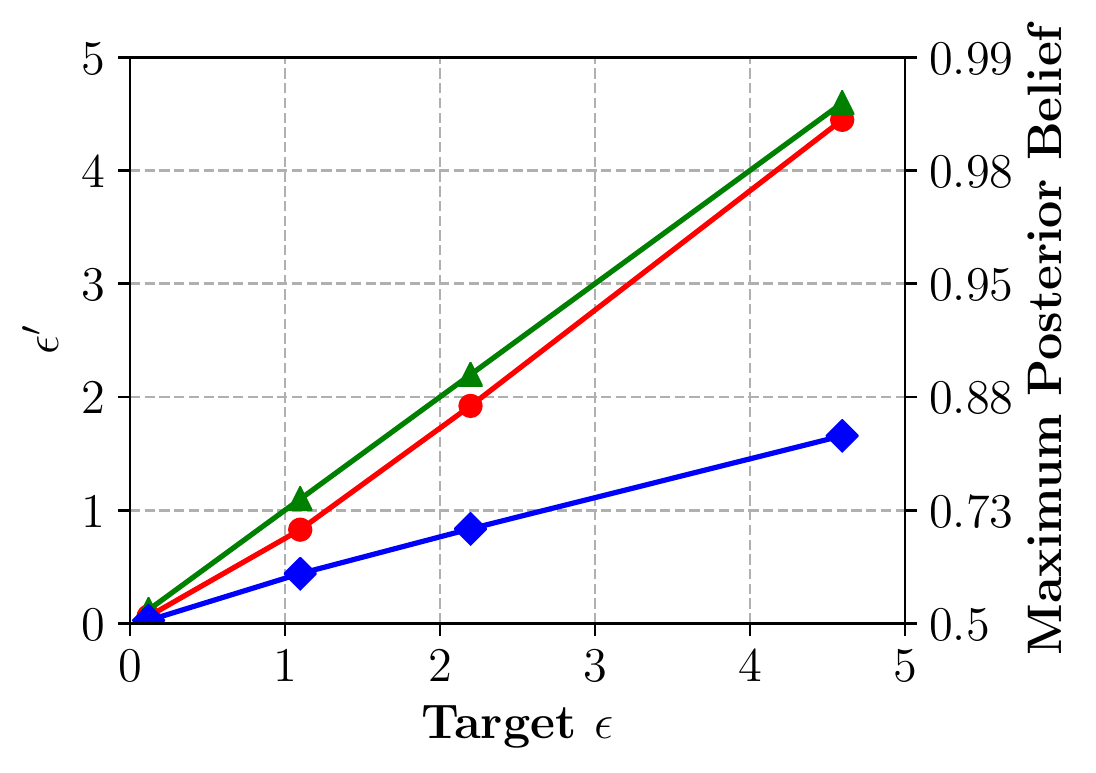}
	        \subcaption{\feps from posterior belief $\beta_k$}
            \label{fig:adult_beta}
	\end{subfigure}%
    \begin{subfigure}[b]{0.26\linewidth}  
            \includegraphics[width=\linewidth]{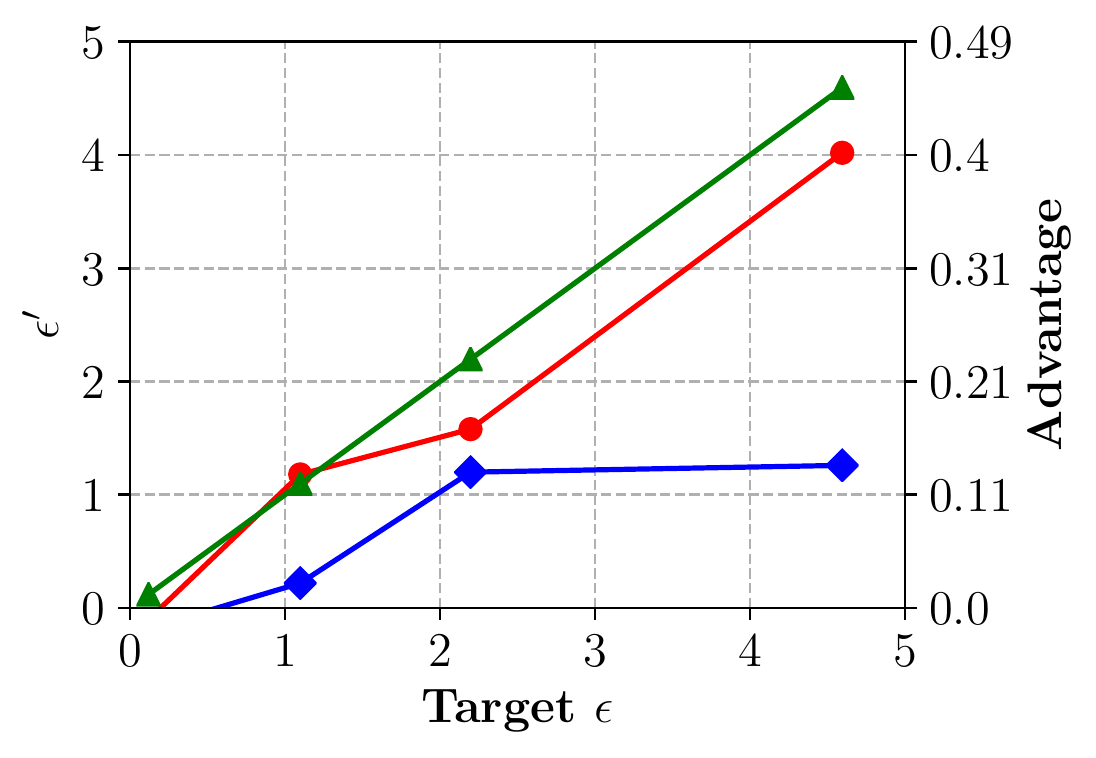}
            \subcaption{\feps from advantage $\madiGau$}
            \label{fig:adult_madi}
    \end{subfigure}
    \begin{subfigure}[b]{0.23\linewidth}  
            \includegraphics[width=\linewidth]{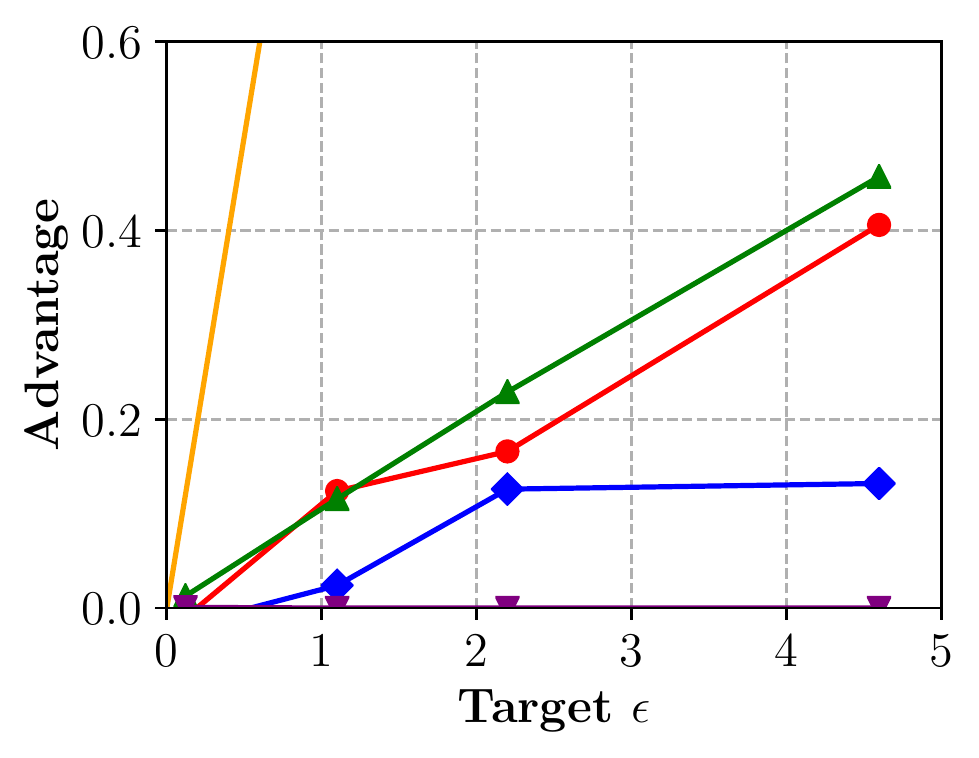}
            \subcaption{Comparison to MI}
            \label{fig:adult_mi}
    \end{subfigure}
    \caption{Audit of $\eps$ (a-c) and comparison with $\Ami$ (d) for Adult data (bounded case)}
    \label{fig:adult_audit}
\end{figure*}

To enable comparison with membership inference we implemented $\Ami$ by expanding the implementation of Jayaraman and Evans~\cite{JE19}, which implements the attack suggested by Yeom et al.~\cite{YGF+18}. Figures~\ref{fig:mnist_mi},~\ref{fig:purchase_mi}, and~\ref{fig:adult_mi} visualize the advantage resulting from both $\AstrongGau$ and $\Ami$ for our setting, as well as the bounds provided by the DP guarantee and the MI bound of Yeom et al.~\cite{YGF+18}. We see that the MI bound is very loose for all evaluated datasets, as previously noted by Jayaraman and Evans~\cite{JE19}. Furthermore, we see that our implementation of $\AstrongGau$ significantly outperforms $\Ami$ on all datasets and values of $\eps$.

\section{Discussion}
\label{sec:disc}
 $\Astrong$ diverges from other attacks against DP or ML, which necessitates a discussion of $\Astrong$'s properties in relation to alternative approaches. Our goal is to construct an adversary that most closely challenges DP, and can be conntected to societal norms and legislation via identifiability score. To this end, $\Astrong$ has knowledge of all but one element of the training data and the gradients at every update step. Since the DP guarantee must hold in the presence of all auxiliary information, both of these assumptions relate the attack model $\Astrong$ directly to the DP guarantee. Since $\Astrong$ has knowledge of all but one element instead of only the distribution, $\Astrong$ possesses significantly more information than MI adversaries. 

A natural question arises w.r.t.~$\Astrong$'s practical relevance. Especially in a federated learning setting $\Astrong$ knows the gradients during every update step, if participating as a data owner. Furthermore, $\Astrong$ could realistically obtain knowledge of a significant portion of the training data, since public reference data is often used in training datasets and only extended with some custom training data records, necessitating the notion of DP in general. 

To further comment on the utility that can be achieved from a differentially private model, we note that the optimal choice for \clip may stray from the original recommendation of Abadi et al.~\cite{abadi2016}. We follow this recommendation and set $\clip=3$, which limits the utility loss that results when \clip is too large (unnecessary noise addition) and too small (loss of information about the gradient). 
Since this balance holds for unbounded DP and does not consider the notion of local sensitivity, we expect that a different \clip may yield better utility than what we report. Varying \clip may also change the balance between local sensitivity and global sensitivity from Figures~\ref{fig:mnist_audit} to~\ref{fig:adult_audit}. 
Furthermore, since gradients change over the course of training, the optimal
value of \clip at the beginning of training may no longer be optimal
toward the end of training according to McMahan et al.~\cite{McMahan2018b}. Adaptively setting the clipping norm as suggested by Thakkar et al.~\cite{Thakkar} may improve utility by changing \clip as training progresses. We expect that doing so might bring \feps closer to \eps when auditing the DP guarantee, and achieve similar by using local sensitivity.

\section{Related Work}
\label{sec:rel}
Choosing and interpreting DP privacy parameters has been addressed from several directions.

Lee and Clifton~\cite{Lee2011, Lee2012} proposed DI as a Bayesian privacy notion which quantifies \eps w.r.t.~an adversary's maximum posterior belief $\rb$ on a finite set of possible input datasets. Yet, both papers focus on the scalar \eps Laplace mechanism without composition, while we consider the \epsdlt multidimensional Gauss mechanism under RDP composition. Li et al.~\cite{Li13} demonstrate that DI matches the DP definition when an adversary decides between two neighboring datasets $\cali{D}, \cali{D'}$. Kasiviswanathan et al.~\cite{KS14} also provide a Bayesian interpretation of DP. While they also formulate posterior belief bounds and discuss local sensitivity, they do not cover expected advantage and implementation aspects such as dataset sensitivity. 

The choice of privacy parameter \eps has been tied to economic consequences. Hsu et al.~\cite{Hsu2014} derive a value for \eps from a probability distribution over a set of negative events and the cost for compensation of affected participants. Our approach avoids the ambiguity of selecting bad events. Abowd and Schmutte~\cite{abowd2019} describe a social choice framework for choosing \eps, which uses the production possibility frontier of the model and the social willingness to accept privacy and accuracy loss. We part from their work by choosing \eps w.r.t.~the advantage of the strong DP adversary. Eibl et al.~\cite{Eibl2018} propose a scheme that allows energy providers and energy consumers to negotiate DP parameters by fixing a tolerable noise scale of the Laplace mechanism. The noise scale is then transformed into the individual posterior belief of the DP adversary per energy consumer. We part from their individual posterior belief analysis and suggest using the local sensitivity between two datasets that are chosen by the dataset sensitivity heuristic.

The evaluation of DP in a deep learning setting has largely focused on MI attacks~\cite{Bernau2019, RRL+18, Hayes2019, JE19, JWK+20, chen2020, shokri2017}. From Yeom et al.~\cite{YGF+18} we take the idea of bounding membership advantage in terms of DP privacy parameter \eps. However, while MI attacks evaluate the DP privacy parameters in practice, DP is defined to offer protection from far stronger adversaries, as Jayaraman et al.~\cite{JE19} empirically validated. 
Humphries et al.~\cite{Humphries} derive a bound for membership advantage that is tighter than the bound derived by Yeom et al.~\cite{YGF+18} by analyzing an adversary with additional information. Furthermore, they analyze the impact of giving up the i.i.d.~assumption. Their work does not suggest an implementation of the strong DP adversary, whereas our work suggests an implemented adversary.

Jagielski et al.~\cite{JUO20} estimate empirical privacy guarantees based on Monte Carlo approximations. While they use active poisoning attacks to construct datasets $\cali{D}$ and $\cali{D'}$ that result in maximally different gradients under gradient clipping, we define dataset sensitivity, which does not require the introduction of malicious samples. 

\section{Conclusion}
\label{sec:conc}

We defined two identifiability bounds for the DP adversary in ML with DPSGD: maximum posterior belief $\rb$ and expected membership advantage $\ra$. These bounds can be transformed to privacy parameter \eps. In consequence, with $\ra$ and $\rb$, data owners and data scientists can map legal and societal expectations w.r.t.~identifiability to corresponding DP privacy parameters. Furthermore, we implemented an instance of the DP adversary for ML with DPSGD and showed that it allows us to audit parameter \eps. We evaluated the effect of sensitivity in DPSGD and showed that our upper bounds are reached under multidimensional queries with composition. To reach the bounds, sensitivity must reflect the local sensitivity of the dataset. We approximate the local sensitivity for DPSGD with a heuristic, improving the utility of the differentially private model when compared to the use of global sensitivity.  
\section{Acknowledgements}
\label{sec:ack}
This work has received funding from the European Union’s Horizon 2020 research and innovation program under grant agreement No.~825333 (MOSAICROWN), and used datasets from the UCI machine learning repository~\cite{DG19}.

\bibliographystyle{abbrv}
\bibliography{DPDL}

\end{document}